\newcommand{\ie}{{\emph{i.e.\/}}}
\DeclarePairedDelimiter\ceil{\lceil}{\rceil}
\DeclareMathOperator{\tr}{tr}
\DeclareMathOperator{\diag}{diag}
\newcommand{\R}{\ensuremath{\mathbb{R}}}
\newcommand{\N}{\ensuremath{\mathbb{N}}}
\newcommand{\C}{\ensuremath{\mathbb{C}}}
\newcommand{\ket}[1]{\ensuremath{|#1\rangle}}
\newcommand{\bra}[1]{\ensuremath{\langle#1|}}
\newcommand{\ketbra}[2]{\ensuremath{\ket{#1} \! \bra{#2}}}
\newcommand{\proj}[1]{\ensuremath{\ketbra{#1}{#1}}}
\newcommand{\braket}[2]{\ensuremath{\langle{#1}|{#2}\rangle}}
\newcommand{\1}{{\rm 1\hspace{-0.9mm}l}}
\newcommand{\Id}{{\rm 1\hspace{-0.9mm}l}}
\newcommand{\ee}{\ensuremath{\mathrm{e}}}
\newcommand{\ii}{\ensuremath{\mathrm{i}}}
\newcommand{\DD}{\mathcal{D}}
\newcommand{\PP}{\mathcal{P}}
\newcommand{\UU}{\mathcal{U}}
\newcommand{\DU}{\mathcal{DU}}
\newcommand{\diaguni}{\ensuremath{\mathcal{DU}}}
\newtheorem{lemma}{Lemma}
\newtheorem{theorem}{Theorem}
\newtheorem{corollary}{Corollary}
\newtheorem{remark}{Remark}
\newtheorem{example}{Example}
\title{On the optimal certification of von Neumann measurements}
\author{Paulina Lewandowska$^{1}$, Aleksandra Krawiec$^{1}$, Ryszard
Kukulski$^1$, \L ukasz Pawela* $^{1}$, \and Zbigniew Pucha\l a$^{1,2}$}
\address{$^1$Institute of Theoretical and Applied Informatics, Polish
Academy of Sciences, ul. Ba{\l}tycka 5, 44-100 Gliwice, Poland}
\address{$^2$ Faculty of Physics, Astronomy and Applied Computer Science,
Jagiellonian University, ul. {\L}ojasiewicza 11, 30-348 Krak{\'o}w, Poland}
\begin{document}
\maketitle
$^{*}$ \normalsize{E-mail address: \url{lpawela@iitis.pl}}
\begin{abstract}
In this report we study certification of quantum measurements, which can be
viewed as the extension of quantum hypotheses testing. This extension involves
also the study of the input state and the measurement procedure. Here, we will
be interested in two-point (binary) certification scheme in which the null and
alternative hypotheses are single element sets. Our goal is to minimize the
probability of the type II error given some fixed statistical significance. In
this report, we begin with studying the two-point certification of pure quantum
states and unitary channels to later use them to prove our main result, which is
the certification of von Neumann measurements in single-shot and parallel
scenarios. From our main result follow the conditions when two pure states,
unitary operations and von Neumann measurements cannot be distinguished
perfectly but still can be certified with a given statistical significance.
Moreover, we show the connection between the certification of quantum channels
or von Neumann measurements and the notion of $q$-numerical range.
\end{abstract}

\section{Introduction}

The validation of sources producing quantum states and measurement devices,
which are involved in quantum computation workflows, is a necessary step of
quantum technology
\cite{aolita2015reliable,spagnolo2014experimental,chabaud2020efficient}. The
search for practical and reliable tools for validation of quantum architecture
has attracted a lot of attention in recent years
\cite{carolan2014experimental,chareton2020toward,
wu2019efficient,jiang2020towards,tiedau2020benchmarking}. Rapid technology
development and increasing interest in quantum computers paved the way towards
creating more and more efficient validation methods of Noisy Intermediate-Scale
Quantum devices (NISQ) \cite{preskill2018quantum, gambetta2019benchmarking}.
Such a growth comes along with ever-increasing requirements for the precision of
the components of quantum devices. The tasks of ensuring the correctness of
quantum devices are referred to as \emph{validation}.

Let us begin with sketching the problem of validation of quantum architectures.
Imagine you are given a black box and are promised two things. First, it
contains a pure quantum state (or a unitary matrix or a von Neumann POVM), and
second, it contains one of two possible choices of these objects. The owner of
the box, Eve, tells you which of the two possibilities is contained within the
box. Yet, for some reason, you do not completely trust her and decide to perform
some kind of hypothesis testing scheme on the black box. You decide to take
Eve's promise as the null hypothesis, $H_0$, for this scheme and the second of
the possibilities as the alternative hypothesis, $H_1$. Since now you own the
box and are free to proceed as you want, you need to prepare some input into the
box and perform a measurement on the output. A particular input state and final
measurement (or only the measurement, for the case when the box contains a
quantum state) will be called a certification strategy. Of course, just like in
classical hypothesis testing, in our certification scheme we have two possible
types of errors. The type I error happens if we reject the null hypothesis when
it was in reality true. The type II error happens if we accept the null
hypothesis when we should have rejected it. The main aim of certification is
finding the optimal strategy which minimizes  one type of error when the other
one is fixed.  In this work we are interested in the minimization of the type II
error given a fixed type I error. This approach will be called
\emph{certification}.

Certification of quantum objects is closely related with the other well-know
method of the validation, that is the problem of discrimination of those
objects. Intuitively, in the discrimination problem we are given one of two
quantum objects sampled according to a given a priori probability distribution.
Hence, the probability of making an error in the discrimination task is equal to
the average of the type I and type II errors over the assumed probability
distribution. Therefore, the discrimination problem can be seen as
\emph{symmetric distinguishability}, as opposed to certification, that is
\emph{asymmetric distinguishability}. In other words, the main difference
between both approaches is that the main task of discrimination is the
minimization over the average of both types of possible errors while the
certification concerns the minimization over one type of error when the bound of
the other one is assumed.

While in the basic version of the aforementioned scenario we focus on the case
when the validated quantum  object can be used exactly once, one can consider
also the situation in which this object can be utilized multiple times in
various configurations. In the parallel scheme the validated object can be used
many times, but no processing can be performed between the usages of this
object. In the adaptive scenario however, we are allowed to perform any
processing we want between the uses of the validated quantum object.

The problem of discrimination of quantum states and channels was solved
analytically by Helstrom  a few decades ago in~\cite{helstrom1976quantum,
watrous2018theory}. The multiple-shot scenario of the discrimination of quantum
states was studied in \cite{namkung2018analysis,namkung2018sequential} whereas
the discrimination of quantum channels in the multiple-shot scenario was
investigated for example in
\cite{duan2009perfect,duan2016parallel,duan2007entanglement,
bae2015discrimination,cao2016minimal}. Examples of channels which cannot be
discriminated perfectly in the parallel scheme, but nonetheless can be
discriminated perfectly using the adaptive approach, were discussed in
\cite{harrow2010adaptive,krawiec2020discrimination}. The work
\cite{ji2006identification} paved the way for studying the discrimination of
quantum measurements. Therefore, this work can be seen as a natural extension of
our works \cite{puchala2018strategies} and \cite{puchala2018multiple}, where we
studied the discrimination of von Neumann measurements in single and
multiple-shot scenarios, respectively.  Nevertheless, one can also consider a
scenario in which we are allowed to obtain an inconclusive answer. Therefore, we
arrive at the unambiguous discrimination of quantum operations discussed in
\cite{puchala2018multiple,wang2006unambiguous}.

The problem of certification of quantum states and channels has not been studied
as exhaustively as their discrimination. The certification of quantum objects
was first studied by Helstrom in \cite{helstrom1976quantum}, where the problem
of pure state certification was considered. Further, certification schemes were
established to the case of mixed states in~\cite{audenaert2012quantum}. A
natural extension of quantum state certification is the certification of
unitary operations. This problem was solved in~\cite{lu2010optimal}.
Considerations about multiple-shot scenario of certification of quantum states 
and unitary channels were investigated in   \cite{lu2010optimal}.

All the above-mentioned approaches towards the certification were considered in
a finite number of steps. Another common approach  involves studying
certification of quantum objects in the asymptotic regime
\cite{mosonyi2015quantum, ogawa2004error, ogawa2005strong}
which assumes that the number of copies of the given quantum object goes to
infinity. It focuses on  studying the convergence of the probability of making
one type of error while a bound on the second one is assumed. This task is
strictly related with the term of relative entropy and its asymptotic
behavior~\cite{nagaoka2006converse}. For a more general overview of quantum
certification we refer the reader to \cite{helstrom1969quantum,
eisert2020quantum}.

This work will begin with recalling one-shot certification scenario which will
later be extended to the multiple-shot case. For this purpose, we will often
make use of the terms of numerical range and $q$-numerical range as essential
tools in the proofs \cite{li1998q,li1998some,tsing1984constrained,nr}.  More
specifically, one of our results presented in this work is a geometric 
interpretation of the formula for minimized probability of the type II error in 
the problem of certification of unitary channels, which is strictly connected 
with the notion of $q$-numerical range. 
Later, basing on the results on the certification of unitary channels we
will extend these considerations to the problem of certification of von Neumann
measurements. It will turn out that the formula for minimized probability of the
type II error can also be connected with the notion of $q$-numerical range. On
top of that, we will show that entanglement can significantly improve the
certification of von Neumann measurements. Eventually, we will prove that the
parallel certification scheme is optimal.

This work is organized as follows. We begin with preliminaries in
Section~\ref{sec:preliminaries}. Then, in
Section~\ref{sec:two_point_certification_of_states} we present the two-point
certification of pure quantum states. Certification of unitary channels is
discussed in Section~\ref{sec:unitary_channels_two_points}. After presenting the
known results we introduce geometrical interpretation of the problem of
certification of unitary channels, expressed in terms of $q$-numerical range.
The certification of von Neumann measurements is studied in
Section~\ref{sec:von_neumann_measurements} and our main result is stated therein
as Theorem~\ref{thm_measurements}. Section~\ref{sec:multiple} generalizes the
results on certification to the multiple-shot scenario and the optimality of the
parallel scheme for certification of von Neumann measurements is presented as
Theorem~\ref{th:parallel_optimal}.

\section{Preliminaries}\label{sec:preliminaries}

Let $M_{d_1,d_2}$ be the set of all matrices of dimension $d_1 \times d_2$ over
the field $\C$. For the sake of simplicity, square matrices will be denoted by
$M_d$. The set of quantum states, that is positive semidefinite operators having
trace equal to one, will be denoted $\DD_d$. By default, when we write
$\ket{\psi}, \ket{\varphi}$, we mean normalized pure states, unless we mention
otherwise. The subset of $M_d$ consisting of unitary matrices will be denoted by
$\UU_d$, while its subgroup of diagonal unitary operators will be denoted by
$\DD \UU_d$.  Let $U \in \UU_d$ be a unitary matrix. A unitary channel
$\Phi_{U}$ is defined as $\Phi_U(\cdot) = U \cdot U^\dagger$. A general quantum
measurement, that is a positive operator valued measure (POVM) $\PP$ is a
collection of positive semidefinite operators $\{E_1, \ldots, E_m \}$ called
\emph{effects}, which sum up to identity, \ie $ \, \, \sum_{i=1}^m E_i = \1$. If
all the effects are rank-one projection operators, then such a measurement is
called von Neumann measurement. Every von Neumann measurement can be
parameterized by a unitary matrix and hence we will use the notation $\PP_{U}$
for a von Neumann measurement with effects $\{\proj{u_1}, \ldots, \proj{u_d}\}$,
where $\ket{u_i}$ is the $i$-th column of the unitary matrix $U$. The action of
quantum measurement $\PP_{U}$ on some state $\rho \in \mathcal{D}_d$ can be
expressed as the action of a quantum channel

\begin{equation}
\PP_{U} : \rho \rightarrow \sum_{i=1}^d \bra{u_i} \rho \ket{u_i} \proj{i}.
\end{equation}

As mentioned in the Introduction, in this work we focus on two-point hypothesis
testing of quantum objects.
The starting point towards the certification of quantum objects is the
hypothesis testing of quantum states. Let $H_0$ be a null hypothesis which
states that the obtained state was $\ket{\psi}$, while the alternative
hypothesis, $H_1$, states that the obtained state was $\ket{\varphi}$. The
certification is performed by the use of a binary measurement $\{ \Omega, \1-
\Omega \}$, where the effect $\Omega$ corresponds to accepting the null
hypothesis and  $\1-\Omega$ accepts the alternative  hypothesis. In this work we
will be considering only POVMs with two effects of this form. Therefore the
effect $\Omega$ uniquely determines the POVM and hence we will be using the
words measurement and effect interchangeably.

Assume we have a fixed measurement $\Omega$. We introduce the probability of the
type I error, $p_\text{I}(\Omega)$, that is the probability of rejecting the
null hypothesis when in fact it was true, as
\begin{equation}
p_\text{I}(\Omega) = \tr \left((\1-\Omega) \proj{\psi} \right) = 1 - \tr \left(
\Omega \proj{\psi} \right).
\end{equation}
The type II error, $p_{\text{II}}(\Omega)$, that is the probability of accepting
the null hypothesis $H_0$ when in reality $H_1$ occurred, is defined as
\begin{equation}
p_\text{II}(\Omega) = \tr \left(\Omega \proj{\varphi} \right).
\end{equation}
In the remainder of this work we will assume the statistical significance
$\delta \in [0,1]$, that is the probability of the type I error will be
upper-bounded by $\delta$. Our goal will be to find a most powerful test, that
is to minimize the probability of the type II error by finding the optimal
measurement, which we will denote as $\Omega_0$. Such $\Omega_0$, which
minimizes $p_\text{II}(\Omega)$ while assuming the  statistical significance
$\delta$, will be called an \emph{optimal measurement}. The minimized
probability of type II error will be denoted by
\begin{equation}
p_{\text{II}} \coloneqq \min_{\Omega: p_{\text{I}}(\Omega) \leq \delta}
p_{\text{II}}(\Omega).
\end{equation}

While certifying quantum channels and von Neumann measurements, we will also
need to minimize over input states. Let a channel $\Phi_0$ correspond to
hypothesis $H_0$ and $\Phi_1$ correspond to hypothesis $H_1$. We define
\begin{equation}
\begin{split}
p_{\text{I}}^{\ket{\psi}}(\Omega) &= \tr \left((\1-\Omega)\Phi_0(\proj{\psi})
\right)\\
p_{\text{II}}^{\ket{\psi}}(\Omega) &= \tr \left(\Omega \Phi_1(\proj{\psi})
\right).
\end{split}
\end{equation}
Naturally, for each input state we can consider minimized probability of type
II error, that is
\begin{equation}
p_{\text{II}}^{\ket{\psi}}=\min_{\Omega: p_{\text{I}}^{\ket{\psi}}(\Omega) \leq
\delta} p_{\text{II}}^{\ket{\psi}}(\Omega).
\end{equation}
Finally, we will be interested in calculating optimized probability of type II
error over all input states. This will be denoted as
\begin{equation}
p_{\text{II}} \coloneqq \min_{\ket{\psi}} p_{\text{II}}^{\ket{\psi}}.
\end{equation}
Note that the symbol $p_{\text{II}}$ is used in two contexts. In the problem of
certification of states the minimization is performed only over measurements
$\Omega$, while in the problem of certification of unitary channels and von
Neumann measurements the minimization is over both measurements $\Omega$ and
input states $\ket{\psi}$. In other words, $p_{\text{II}}$ is equal to the
optimized probability of the type II error in certain certification problem.

The input state which minimizes $p_{\text{II}}$ will be called an \emph{optimal
state}. We will use the term \emph{optimal strategy} to denote both the optimal
state and the optimal measurement.

Now, we introduce a basic toolbox for studying the certification of quantum
objects which is strictly related with the problem of discrimination of quantum
channels. First, we will be using the notion of the diamond norm. The diamond
norm of a superoperator $\Psi$ is defined as
\begin{equation}
\| \Psi \|_\diamond
\coloneqq \max_{\|X\|_1 = 1} \| \left(\Psi \otimes \1\right) (X) \|_1.
\end{equation}
The celebrated theorem of Helstrom \cite{helstrom1976quantum} gives a lower
bound on the probability of making an error in distinction in the scenario of
symmetric discrimination of quantum channels. The probability of incorrect
symmetric discrimination between quantum channels $\Phi$ and $\Psi$ is bounded
as follows
\begin{equation}
p_{e} \ge  \frac12 - \frac14 \| \Phi - \Psi \|_\diamond.
\end{equation}

Moreover, our results will often make use of the terms of numerical range and
$q$-numerical range~\cite{nr}. The numerical range is a subset of complex plane
defined for a matrix $X \in M_d$ as
\begin{equation}
W(X) := \{  \bra{\psi} X \ket{\psi}: \braket{\psi}{\psi}=1 \}
\end{equation}
while the $q$-numerical range
\cite{li1998q,tsing1984constrained,li1998some}
is defined for a matrix $X\in M_d$ as
\begin{equation}
W_q(X) := \{  \bra{\psi} X \ket{\varphi}: \braket{\psi}{\psi} = \braket{\varphi}{\varphi} = 1, \, \braket{\psi}{\varphi} =q,\, q \in \C \}.
\end{equation}
The standard numerical range is the special case of $q$-numerical range for
$q=1$, that is $W(X) = W_1(X)$.
We will use the notation
\begin{equation}\label{eq:dist_q_nr_to_zero_def}
\nu_{q}(X):= \min\{ |x|: x \in W_q(X) \}
\end{equation}
to denote the distance on a complex plane from $q$-numerical range to zero. In
the case  when $q=1$, we will simply write $\nu(X)$. The main properties of
$q$-numerical range are its convexity and
compactness~\cite{tsing1984constrained}. The detailed shape of $q$-numerical
range is described in \cite{li1998some}. The properties of $q$-numerical
range~\cite{duan2009perfect} that will be used throughout this paper are
\begin{equation}\label{properties-q-nr-inclusions}
W_{q'} \subseteq \frac{q'}{q} W_{q} \quad \text{for} \quad q \leq q', \quad q,q' \in \R
\end{equation} and
\begin{equation}\label{properties-q-nr-tensor-product}
W_q (X \otimes \1) = W_q(X), \quad q\in \R.
\end{equation}
From the above it is easy to see that \begin{equation}
\nu_q (X \otimes \1) = \nu_q(X), \quad q\in \R.
\end{equation}
In the  we provide an animation of $q$-numerical range
of unitary matrix $U \in \UU_3$ with eigenvalues $1, \ee^{ \frac{\pi \ii}{3}}$
and $\ee^{ \frac{2\pi \ii}{3}}$ for all parameters $q \in [0,1]$.

\section{Two-point certification of pure
states}\label{sec:two_point_certification_of_states}

In this section we recall the results concerning the certification of pure
quantum states. We state the optimized  probability of the type II error for the
quantum hypothesis testing problem as well as the form of the optimal
measurement which should be used for the certification. Although these results
may seem quite technical, they will lay the groundwork for studying the
certification of unitary channels and von Neumann measurements in further
sections.
\subsection{Certification scheme.}
Assume we are given one of two known quantum states either $\ket{\psi}$ or
$\ket{\varphi}$. The hypothesis $H_0$ corresponds to the state $\ket{\psi}$,
while the alternative hypothesis $H_1$ corresponds to the state $\ket{\varphi}$.
In other words, our goal is to decide whether the given state was $\ket{\psi}$
or $\ket{\varphi}$. To make a decision, we need to measure the given state and
we are allowed to use any POVM. We will use a quantum measurement with effects
$\{\Omega , \1- \Omega \}$, where the first effect $\Omega$ accepts the
hypothesis $H_0$ and the second effect $\1-\Omega$ accepts $H_1$.  Hence, the
probability of obtaining the type I error is given by
\begin{equation}\label{eq:pI_states}
p_\text{I}(\Omega) = \bra{\psi} (\1-\Omega) \ket{\psi}.
\end{equation}
The probability of obtaining the type II error to be minimized yields
\begin{equation}\label{eq:pII_states}
p_\text{II}=\min_{\Omega: p_{\text{I}}(\Omega) \le \delta}
\bra{\varphi} \Omega \ket{\varphi} \eqqcolon \bra{\varphi} \Omega_0 \ket{\varphi},
\end{equation}
where the minimization is performed by finding the optimal measurement
$\Omega_0$.

This problem was explored in \cite{helstrom1976quantum}. However, to keep this
work self-consistent we present in Appendix~\ref{app:states} an alternative version of the
proof.

\begin{theorem}\label{thm_pure_state}
Consider the problem of two-point certification of pure quantum states with
hypotheses given by
\begin{equation}\label{eq:hytheses_states}
\begin{split}
&H_0: \ \ket{\psi}, \\
&H_1: \ \ket{\varphi}.
\end{split}
\end{equation}
and statistical significance $\delta \in [0,1]$. Then, for the most powerful test,
the probability of the type II error \eqref{eq:pII_states} yields
\begin{equation}
p_\text{II}  = \left\{ \begin{array}{ll}
0
& \text{if} \,\,\,|\braket{\psi}{\varphi}| \leq \sqrt{\delta}, \\
\left(|\braket{\psi}{\varphi}| \sqrt{1-\delta} -
\sqrt{1-|\braket{\psi}{\varphi}|^2}
\sqrt{\delta}\right)^2
& \text{if} \,\,\, |\braket{\psi}{\varphi}| > \sqrt{\delta}.
\end{array} \right.
\end{equation}
\end{theorem}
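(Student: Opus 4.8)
The plan is to set up a constrained optimization problem and solve it directly. We want to minimize $\bra{\varphi}\Omega\ket{\varphi}$ over effects $0\le\Omega\le\1$ subject to $\bra{\psi}\Omega\ket{\psi}\ge 1-\delta$. The first observation I would make is that only the two-dimensional subspace $\SPAN\{\ket{\psi},\ket{\varphi}\}$ matters: components of $\Omega$ acting outside this subspace or coupling it to its orthogonal complement can only hurt or be irrelevant, so without loss of generality we may restrict to $\Omega$ supported on this (at most) two-dimensional space. Writing $c=|\braket{\psi}{\varphi}|$, I would choose an orthonormal basis adapted to the problem — for instance keep $\ket{\psi}$ as one basis vector and take an orthonormal $\ket{\psi^\perp}$ so that $\ket{\varphi}=c\ket{\psi}+\sqrt{1-c^2}\,\ket{\psi^\perp}$ (absorbing phases, which do not affect the moduli in the objective or constraint).

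Next I would parametrize $\Omega$ as a $2\times 2$ positive semidefinite contraction and compute the two quadratic forms $\bra{\psi}\Omega\ket{\psi}$ and $\bra{\varphi}\Omega\ket{\varphi}$ explicitly in terms of the matrix entries of $\Omega$. The structure is then a small semidefinite program. I expect the optimum to lie on the boundary, so that the type I constraint is saturated, $\bra{\psi}\Omega_0\ket{\psi}=1-\delta$, and moreover that the optimal $\Omega_0$ is a rank-one projector $\proj{\omega}$ onto some unit vector $\ket{\omega}$ in the subspace. Under that ansatz the problem reduces to minimizing $|\braket{\omega}{\varphi}|^2$ subject to $|\braket{\omega}{\psi}|^2=1-\delta$, which is a clean geometric problem about angles between unit vectors in a two-dimensional space.

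The geometric heart of the argument is then a statement about three unit vectors: given that $\ket{\omega}$ makes a fixed overlap $\sqrt{1-\delta}$ with $\ket{\psi}$, and $\ket{\psi},\ket{\varphi}$ have fixed overlap $c$, how small can $|\braket{\omega}{\varphi}|$ be? Working in the real plane spanned by the relevant vectors and letting angles add or subtract, the minimum overlap $|\braket{\omega}{\varphi}|$ corresponds to placing $\ket{\omega}$ on the far side of $\ket{\psi}$ from $\ket{\varphi}$. This yields
\begin{equation}
|\braket{\omega}{\varphi}| = \left| c\sqrt{1-\delta} - \sqrt{1-c^2}\sqrt{\delta}\right|,
\end{equation}
whose square is exactly the claimed value in the regime $c>\sqrt{\delta}$. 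When $c\le\sqrt{\delta}$ the angle between $\ket{\psi}$ and $\ket{\varphi}$ is large enough that $\ket{\omega}$ can be made orthogonal to $\ket{\varphi}$ while still satisfying the constraint, giving $p_{\text{II}}=0$; this is the perfect-certification threshold.

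The main obstacle I anticipate is not the geometry but rigorously justifying the two reductions: that restricting to the two-dimensional subspace is lossless, and that the optimal effect may be taken rank-one with the type I constraint tight. For the first, I would argue by a pinching or compression argument on $\Omega$. For the second, I would rely on convexity — the feasible set of effects is convex and the objective is linear, so an optimum is attained at an extreme point — together with a direct check that increasing $\bra{\psi}\Omega\ket{\psi}$ beyond necessity only forces $\bra{\varphi}\Omega\ket{\varphi}$ upward, so the constraint saturates. Once these structural facts are in place, the remaining computation is the elementary trigonometric minimization sketched above, and the case split on $\sqrt{\delta}$ versus $c$ falls out naturally.
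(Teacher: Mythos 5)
Your overall route is the same as the paper's: compress $\Omega$ to the two-dimensional span of $\ket{\psi},\ket{\varphi}$ (the paper does exactly this, via $\Omega=\Pi\widetilde{\Omega}\Pi$ with $\Pi=\proj{\psi}+\proj{\psi^\perp}$), then solve a small constrained optimization, with the same case split at $|\braket{\psi}{\varphi}|=\sqrt{\delta}$ and the same closed-form geometry. Your trigonometric computation and the threshold for perfect certification are correct.

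The one step whose proposed justification would fail as stated is the reduction to a rank-one projector. You argue: the feasible set is convex and the objective linear, hence the optimum is attained at an extreme point, hence at a projector. But the relevant extreme points are those of the feasible set $\{0\le\Omega\le\1\}\cap\{\bra{\psi}\Omega\ket{\psi}\ge 1-\delta\}$, not of the effect body alone, and cutting the effect body with a half-space creates new extreme points on the constraint hyperplane that are not projections: for instance $\Omega=t\proj{v}$ with $t\,|\braket{v}{\psi}|^2=1-\delta$ and $t<1$, or $\Omega=\1-s\proj{w}$ with $s\,|\braket{w}{\psi}|^2=\delta$. One can check by direct computation that both families tie with the rank-one projectors, yielding the same minimum $\bigl(c\sqrt{1-\delta}-\sqrt{1-c^2}\sqrt{\delta}\bigr)^2$ with $c=|\braket{\psi}{\varphi}|$, so your final answer survives; but your argument does not rule them out, so as written it is incomplete. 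The paper sidesteps this by parametrizing a completely general effect on the subspace as $\Omega=a\Pi+b\proj{\omega}$ with $a,b\ge 0$, $a+b\le 1$, and minimizing over $(a,b,\ket{\omega})$ jointly; there the rank-one optimum ($a=0$, $b=1$) is an output of the optimization rather than an a priori ansatz (though the paper, too, defers the final optimality verification to ``standard constrained optimization techniques''). To close your gap, either enumerate and check the additional extreme-point families above, or adopt the paper's parametrization. Note also that your saturation claim ($\bra{\psi}\Omega_0\ket{\psi}=1-\delta$ at the optimum) is valid only in the regime $|\braket{\psi}{\varphi}|>\sqrt{\delta}$; in the other regime the optimal projector satisfies the type I constraint strictly, which your separate treatment of that case implicitly accommodates.
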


The proof of the above theorem is presented in Appendix~\ref{app:states}. This proof gives a
construction of the optimal measurement which minimizes the probability of the
type II error. The exact form of such an optimal measurement is stated as the
following corollary.

\begin{corollary}\label{cor:_pure_state}
	The optimal strategy for two-point certification of pure quantum states
	$\ket{\psi}$ and $\ket{\varphi}$, with
	statistical significance $\delta$ yields
	\begin{enumerate}
		\item if $|\braket{\psi}{\varphi}| \leq \sqrt{\delta}$, then the optimal
		measurement is given by $\Omega_0= \ketbra{\omega}{\omega}$, where $ \ket{\omega} =
		\frac{\ket{\widetilde{\omega}}}{||\ket{\widetilde{\omega}}||} $, $
		\ket{\widetilde{\omega}} = \ket{\psi} - \braket{\varphi}{\psi}
		\ket{\varphi}$;
		\item if  $|\braket{\psi}{\varphi}| > \sqrt{\delta}$, then the optimal
		measurement is given by $\Omega_0 = \ketbra{\omega}{\omega}$ for $\ket{\omega} =
		\sqrt{1-\delta} \ket{\psi} - \sqrt{\delta} \ket{ \psi^\perp}$, $\ket{\psi^\perp}
		= \frac{\ket{\widetilde{\psi^\perp}}}{|| \ket{\widetilde{\psi^\perp}
		}||} $,
		where $ \ket{\widetilde{\psi^\perp}} = \ket{\varphi} -
		\braket{\psi}{\varphi}
		\ket{\psi}$.
	\end{enumerate}
\end{corollary}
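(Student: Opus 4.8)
The plan is to reduce the problem to the two-dimensional subspace $V = \SPAN\{\ket{\psi},\ket{\varphi}\}$, where it becomes an elementary optimization. First I would fix the global phases so that $\braket{\psi}{\varphi} = |\braket{\psi}{\varphi}| =: c \ge 0$ is real, and introduce the unit vector $\ket{\psi^\perp} \in V$ orthogonal to $\ket{\psi}$, so that $\ket{\varphi} = c\ket{\psi} + s\ket{\psi^\perp}$ with $s = \sqrt{1-c^2}$. Let $P$ denote the orthogonal projection onto $V$. For any feasible effect $\Omega$, the replacement $\Omega \mapsto P\Omega P$ leaves both $\bra{\psi}\Omega\ket{\psi}$ and $\bra{\varphi}\Omega\ket{\varphi}$ unchanged (since $P\ket{\psi} = \ket{\psi}$ and $P\ket{\varphi} = \ket{\varphi}$) and preserves $0 \le \Omega \le \1$. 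Hence it suffices to optimize over effects supported on $V$, i.e. over $2\times 2$ positive semidefinite matrices $\Omega$ with $\Omega \le \1$.

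Next I would pin down the structure of the optimizer. The objective $\bra{\varphi}\Omega\ket{\varphi}$ and the constraint $\bra{\psi}\Omega\ket{\psi} \ge 1-\delta$ are both linear in $\Omega$, so the feasible set is convex and compact and the minimum is attained at an extreme point. A short monotonicity argument shows the type I constraint is saturated at any optimum with positive objective: if $\bra{\psi}\Omega\ket{\psi} > 1-\delta$ and $\bra{\varphi}\Omega\ket{\varphi} > 0$, then scaling $\Omega \mapsto (1-t)\Omega$ for small $t > 0$ remains feasible and strictly decreases the objective. Together with the extreme-point property this suggests restricting to rank-one projections $\Omega = \proj{\omega}$ with $|\braket{\psi}{\omega}|^2 = 1-\delta$. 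Writing $\ket{\omega} = \alpha\ket{\psi} + \beta\ket{\psi^\perp}$ with $|\alpha|^2 + |\beta|^2 = 1$, the active constraint forces $|\alpha|^2 = 1-\delta$ and $|\beta|^2 = \delta$.

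It then remains to optimize over the relative phase of $\alpha$ and $\beta$. Since $\bra{\varphi}\Omega\ket{\varphi} = |\braket{\varphi}{\omega}|^2 = |c\alpha + s\beta|^2 = c^2(1-\delta) + s^2\delta + 2cs\sqrt{(1-\delta)\delta}\cos\theta$, the minimum over $\theta$ is attained at $\cos\theta = -1$ and equals $(c\sqrt{1-\delta} - s\sqrt{\delta})^2$. This bracket is nonnegative precisely when $c\sqrt{1-\delta} \ge s\sqrt{\delta}$, which upon squaring is equivalent to $c^2 \ge \delta$, i.e. $|\braket{\psi}{\varphi}| \ge \sqrt{\delta}$ — exactly the threshold separating the two cases. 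When $|\braket{\psi}{\varphi}| > \sqrt{\delta}$ the value $\left(|\braket{\psi}{\varphi}|\sqrt{1-\delta} - \sqrt{1-|\braket{\psi}{\varphi}|^2}\sqrt{\delta}\right)^2$ is realised by $\ket{\omega} = \sqrt{1-\delta}\ket{\psi} - \sqrt{\delta}\ket{\psi^\perp}$ of Corollary~\ref{cor:_pure_state}. When $|\braket{\psi}{\varphi}| \le \sqrt{\delta}$ the bracket is nonpositive, which signals that one can instead choose $\ket{\omega}$ orthogonal to $\ket{\varphi}$ while still meeting the type I constraint (one checks $|\braket{\psi}{\omega}|^2 = 1 - |\braket{\psi}{\varphi}|^2 \ge 1-\delta$), giving $p_{\text{II}} = 0$.

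The main obstacle is justifying rigorously that the optimizer may be taken to be a rank-one projection with the type I constraint saturated, rather than a general effect. The cleanest way to close this gap is to prove the matching lower bound directly: for an arbitrary feasible $\Omega$ on $V$, bound $\bra{\varphi}\Omega\ket{\varphi} = \|\sqrt{\Omega}\ket{\varphi}\|^2$ from below using $0 \le \Omega \le \1$ together with $\bra{\psi}\Omega\ket{\psi} \ge 1-\delta$, controlling the off-diagonal term $\bra{\psi}\Omega\ket{\psi^\perp}$ via the positivity of both $\Omega$ and $\1 - \Omega$, and then to exhibit the explicit $\Omega_0$ above as the matching upper bound. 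This two-sided argument sidesteps any appeal to extreme-point theory and simultaneously produces the optimal measurement recorded in Corollary~\ref{cor:_pure_state}.
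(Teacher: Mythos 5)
Your route is essentially the one the paper takes in Appendix~\ref{app:states}: restrict to the two-dimensional span (the paper compresses by $\Pi = \proj{\psi}+\proj{\psi^\perp}$, you by $P$), fix phases, and exhibit exactly the optimizers of Corollary~\ref{cor:_pure_state}. The one place you genuinely diverge is the reduction to rank-one projections saturating the type~I constraint, and as stated that step fails: the extreme points of the feasible set $\{0 \le \Omega \le \1,\ \bra{\psi}\Omega\ket{\psi}\ge 1-\delta\}$ are \emph{not} only projections. Concretely, any subnormalized effect $\lambda\proj{\omega}$ with $0<\lambda<1$ and $\lambda|\braket{\psi}{\omega}|^2 = 1-\delta$ is an extreme point of the feasible set, because the smallest face of the effect body containing it is the segment $\{\mu\proj{\omega}: 0\le\mu\le 1\}$, and that segment meets the constraint hyperplane in a single point. (These extra extreme points happen to yield the same optimal value, but that is a fact requiring proof, not a consequence of extremality.) Your scaling argument correctly gives ``constraint saturated or objective zero,'' but it does not combine with extremality to force a rank-one projection. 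The paper sidesteps the issue with a complete parametrization: every effect on the span can be written $\Omega = a\Pi + b\proj{\omega}$ with $a,b\ge 0$, $a+b\le 1$ (spectrally, $a=\lambda_2$, $b=\lambda_1-\lambda_2$), after which the problem is an elementary four-parameter constrained minimization.

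Your proposed fallback does close the gap, and since it is where the real content lies you should carry it out. In the basis $\{\ket{\psi},\ket{\psi^\perp}\}$ set $p=\bra{\psi}\Omega\ket{\psi}\ge 1-\delta$, $q=\bra{\psi^\perp}\Omega\ket{\psi^\perp}$, $z=\bra{\psi}\Omega\ket{\psi^\perp}$, and write $c=|\braket{\psi}{\varphi}|>\sqrt{\delta}$, $s=\sqrt{1-c^2}$. Positivity of $\Omega$ and of $\1-\Omega$ give $|z|^2\le pq$ and $|z|^2\le(1-p)(1-q)$, hence $\bra{\varphi}\Omega\ket{\varphi}\ge c^2p+s^2q-2cs|z|$. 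If $p+q\le 1$, use $|z|\le\sqrt{pq}$ and $q\le 1-p\le\delta$ to get
\begin{equation*}
\bra{\varphi}\Omega\ket{\varphi}\ge \left(c\sqrt{p}-s\sqrt{q}\right)^2\ge \left(c\sqrt{1-\delta}-s\sqrt{\delta}\right)^2,
\end{equation*}
the bracket being positive because $c>\sqrt{\delta}$ is equivalent to $c\sqrt{1-\delta}>s\sqrt{\delta}$. If $p+q\ge 1$, use $|z|\le\sqrt{(1-p)(1-q)}$ to get $\bra{\varphi}\Omega\ket{\varphi}\ge 1-\left(c\sqrt{1-p}+s\sqrt{1-q}\right)^2$; maximizing $c u+s v$ over $u\le\sqrt{\delta}$, $u^2+v^2\le1$ gives $u=\sqrt{\delta}$, $v=\sqrt{1-\delta}$ (the objective is increasing in $u$ along the arc since $u\le\sqrt{\delta}<c$), and the resulting bound $1-\left(c\sqrt{\delta}+s\sqrt{1-\delta}\right)^2$ equals $\left(c\sqrt{1-\delta}-s\sqrt{\delta}\right)^2$ by the identity $\left(c\sqrt{1-\delta}-s\sqrt{\delta}\right)^2+\left(c\sqrt{\delta}+s\sqrt{1-\delta}\right)^2=1$. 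Together with your explicit $\Omega_0$ this is the two-sided argument you wanted; the case $c\le\sqrt{\delta}$ needs no lower bound, as $0$ is attained by your choice $\ket{\omega}\perp\ket{\varphi}$, which is precisely the measurement in item (1) of the corollary. In fairness, the paper itself is terse at the analogous point (``standard constrained optimization techniques''), so either completion — your direct bound or finishing the paper's four-parameter minimization — is a genuine addition.
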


\section{Certification of unitary
channels}\label{sec:unitary_channels_two_points}

In this section we will be interested in certification of two unitary channels
$\Phi_{U}$ and $\Phi_{V}$ for $U,V \in \UU_d$. Without loss of generality we can
assume that one of these unitary matrices  is the identity matrix and then our
task reduces to certification of channels $\Phi_\1$ and $\Phi_U$.  In the most
general case, we are allowed to use entanglement by adding an additional system.
Hence, the null hypothesis $H_0$ yields that the unknown channel is $\Phi_\1
\otimes \1$ and the alternative hypothesis $H_1$ yields that the unknown channel
is $\Phi_U \otimes \1$.

\subsection{Certification scheme}
The idea behind the scheme of certification of unitary channels is to reduce
this problem to certification of quantum states discussed in the previous
section. We prepare some (possibly
entangled) input state  $\ket{\psi}$ and perform the unknown channel on it. The
resulting state is either $\left(\1 \otimes \1 \right)\ket{\psi}$ or
$\left(U \otimes \1 \right)\ket{\psi}$. Then, we perform the measurement
$\{\Omega , \1- \Omega \}$ and make a decision whether the given channel was
$\Phi_\1 \otimes \1$ or $\Phi_U \otimes \1$. The effect $\Omega$ corresponds to
accepting $H_0$
hypothesis while $\1-\Omega$ corresponds to the alternative hypothesis~$H_1$.

The results of minimization of the probability of the type II error over input
states $\ket{\psi}$ and measurements $\Omega$ are summarized as the following
theorem. This reasoning is based on the results from Theorem
\ref{thm_pure_state}, while a related study of this problem can be found in
\cite{lu2010optimal}.

\begin{theorem}\label{thm:unitary_channels}
Consider the problem of two-point certification of unitary channels with
hypotheses
\begin{equation}\label{eq:hypotheses_unitary_channels}
\begin{split}
&H_0: \  \Phi_\1 \otimes \1, \\
&H_1: \  \Phi_U \otimes \1.
\end{split}
\end{equation}
and statistical significance $\delta \in [0,1]$. Then, for the most powerful test, the
probability of the type II error yields
\begin{equation}\label{error-for-unitary-channel}
p_{\text{II}}  = \left\{ \begin{array}{ll}
0
&\text{if} \,\,\, |\bra{\psi_0}U  \ket{\psi_0}| \leq \sqrt{\delta}, \\
\left(|\bra{\psi_0}U \ket{\psi_0}| \sqrt{1-\delta} -
\sqrt{1-|\bra{\psi_0}U \ket{\psi_0}|^2}
\sqrt{\delta}\right)^2
& \text{if} \,\,\, |\bra{\psi_0}U \ket{\psi_0}|> \sqrt{\delta},
\end{array} \right.
\end{equation}
where $\ket{\psi_0} \in \arg\min_{\ket{\psi}} |\bra{\psi} U \ket{\psi}|$.
\end{theorem}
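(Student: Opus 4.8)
The plan is to reduce the certification of unitary channels to the certification of pure states, which was already solved in Theorem~\ref{thm_pure_state}, and then optimize over the choice of input state. First I would fix an arbitrary (possibly entangled) input state $\ket{\psi}$ on the enlarged system. Under $H_0$ the output is $(\1\otimes\1)\ket{\psi}=\ket{\psi}$, and under $H_1$ the output is $(U\otimes\1)\ket{\psi}$. Thus, for a fixed input, the problem is \emph{exactly} the two-point certification of the two pure states $\ket{\psi}$ and $(U\otimes\1)\ket{\psi}$. Applying Theorem~\ref{thm_pure_state} with these two states, the optimized type~II error depends on the inputs only through the modulus of their overlap, namely
\begin{equation}
|\braket{\psi}{(U\otimes\1)\psi}| = |\bra{\psi}(U\otimes\1)\ket{\psi}|.
\end{equation}
This immediately gives the per-input formula $p_{\text{II}}^{\ket{\psi}}$ in the same piecewise form, with $|\braket{\psi}{\varphi}|$ replaced by this overlap.

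The next step is to minimize $p_{\text{II}}^{\ket{\psi}}$ over all input states $\ket{\psi}$. The key observation is that the piecewise expression in Theorem~\ref{thm_pure_state} is a \emph{monotonically increasing} function of the overlap modulus $t=|\braket{\psi}{\varphi}|$ on $[0,1]$: it is zero for $t\le\sqrt{\delta}$ and strictly increasing for $t>\sqrt{\delta}$. Hence minimizing the type~II error over inputs is equivalent to minimizing the overlap modulus $|\bra{\psi}(U\otimes\1)\ket{\psi}|$ over all $\ket{\psi}$. I would then invoke the tensor-product property of the numerical range, equation~\eqref{properties-q-nr-tensor-product} (equivalently $\nu_q(X\otimes\1)=\nu_q(X)$ specialized to $q=1$), to argue that adding the ancilla does not help: the set of achievable overlaps on the enlarged system is $W(U\otimes\1)=W(U)$. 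Consequently the minimal overlap modulus equals $\min_{\ket{\psi}}|\bra{\psi}U\ket{\psi}|=\nu(U)$, attained by some state $\ket{\psi_0}$ living on the unextended system. Substituting this optimal overlap into the per-input formula yields the stated expression.

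The main obstacle I anticipate is justifying rigorously that entanglement gives no advantage, i.e.\ that the minimum of $|\bra{\psi}(U\otimes\1)\ket{\psi}|$ over entangled $\ket{\psi}$ coincides with the minimum of $|\bra{\psi}U\ket{\psi}|$ over unentangled inputs. This is precisely the content of $W(U\otimes\1)=W(U)$: the inclusion $W(U)\subseteq W(U\otimes\1)$ is trivial (take product states $\ket{\psi}\otimes\ket{e}$), while the reverse inclusion $W(U\otimes\1)\subseteq W(U)$ follows from the convexity of the numerical range together with the fact that for any $\ket{\psi}$ the reduced expectation $\bra{\psi}(U\otimes\1)\ket{\psi}=\tr(U\,\rho)$ with $\rho=\tr_2\proj{\psi}$ a density matrix, and $\{\tr(U\rho):\rho\in\DD_d\}=W(U)$ by the Toeplitz--Hausdorff theorem. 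A secondary point requiring care is the monotonicity argument: since the type~II error is a nondecreasing function of the overlap, one must confirm that the global minimizer over inputs is obtained at the state minimizing the overlap rather than, say, at some boundary where the formula switches branches; this is immediate once monotonicity across the whole interval $[0,1]$ is established. With these two facts in hand, the result follows directly from Theorem~\ref{thm_pure_state}.
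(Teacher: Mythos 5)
Your proposal is correct and follows essentially the same route as the paper's proof: fix the input state, reduce to pure-state certification of $\ket{\psi}$ versus $(U\otimes\1)\ket{\psi}$ via Theorem~\ref{thm_pure_state}, then use $W(U\otimes\1)=W(U)$ (equivalently $\nu(U\otimes\1)=\nu(U)$) to conclude that entanglement gives no advantage and the minimization reduces to $\nu(U)$. Your two supplementary points---the monotonicity of the type~II error formula in the overlap modulus, and the partial-trace/convexity argument for $W(U\otimes\1)\subseteq W(U)$---are correct and merely fill in details that the paper leaves implicit or cites from the literature.
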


\begin{proof}
Let us first introduce the hypotheses conditioned by the input state
$\ket{\psi}$
\begin{equation}\label{eq:conditional_hypotheses}
\begin{split}
&H_0^{\ket{\psi}}: \  \ket{\psi},  \\
&H_1^{\ket{\psi}}: \  (U \otimes \1) \ket{\psi}.
\end{split}
\end{equation}
We do not make any assumptions on the dimension of the auxiliary system for the
time being. It will turn out, however, that it suffices if its dimension equals one.
The hypotheses in \eqref{eq:conditional_hypotheses} correspond to output states
after the application of the extended unitary channel on the state $\ket\psi$.
For these hypotheses we consider the statistical significance $\delta \in
[0,1]$, that is
\begin{equation}
p_\text{I}^{\ket{\psi}}(\Omega) = \tr \left( (\1-\Omega) (\Phi_\1 \otimes
\1)(\proj{\psi}) \right)\le \delta.
\end{equation}
Our goal will be to calculate the minimized probability of the type II error

\begin{equation}\label{eq:pII_unitary_channels}
p_\text{II}=\min_{\ket{\psi}}\min_{\Omega: p_{\text{I}}^{\ket{\psi}}(\Omega)
\le \delta}
\tr(\Omega (\Phi_U \otimes \1)(\ketbra{\psi}{\psi})) \eqqcolon \tr(\Omega_0
(\Phi_U \otimes \1)(\ketbra{\psi_0}{\psi_0})),
\end{equation}
where naturally, for the optimal strategy $\ket{\psi_0}$ and $ \Omega_0$ it holds
that $p_{\text{I}}^{\ket{\psi_0}}(\Omega_0) \leq \delta$.

Now we will show that the use of entanglement is unnecessary. From
Theorem~\ref{thm_pure_state} we know that the probability of the type II error,
$p_{\text{II}}$, depends on the minimization of the inner product
$\min_{\ket{\psi}} | \bra{\psi} U \otimes \1 \ket{\psi}  |$. Directly from the
definition of numerical range we can see that $ \bra{\psi} U \otimes \1
\ket{\psi}  \in W(U\otimes \1) $. From the property of numerical range given in
Eq.~\eqref{properties-q-nr-tensor-product} and using the notation introduced in
Eq.~\eqref{eq:dist_q_nr_to_zero_def} we have
\begin{equation}
\nu \left( U \otimes \1 \right) = \nu \left( U  \right).
\end{equation}
Let $\ket{\psi_0}$ be the considered optimal input state, i.e.
$\ket{\psi_0} \in \arg\min_{\ket{\psi}} |\bra{\psi} U \ket{\psi}|$.
Therefore we can reformulate our hypotheses as
\begin{equation}
\begin{split}
&H_0^{\ket{\psi_0}}: \  \ket{\psi_0},  \\
&H_1^{\ket{\psi_0}}: \  U \ket{\psi_0}.
\end{split}
\end{equation}
These hypotheses, when taking
$\ket{\varphi} \coloneqq U \ket{\psi_0}$, were the subject of interest in
Theorem~\ref{thm_pure_state}.
\end{proof}

The next corollary follows directly from the above proof.
\begin{corollary}
Entanglement is not needed for the certification of unitary channels.
\end{corollary}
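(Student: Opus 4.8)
The plan is to extract the claim directly from the proof of Theorem~\ref{thm:unitary_channels}, pinpointing the single place where the auxiliary system could have mattered. First I would note that, for fixed $\delta$, the optimized type II error in \eqref{error-for-unitary-channel} depends on the input state only through the scalar $|\bra{\psi} U \otimes \1 \ket{\psi}|$, and that on the relevant range $[\sqrt{\delta},1]$ this dependence is monotonically increasing. Hence minimizing $p_{\text{II}}$ over admissible input states is equivalent to minimizing $|\bra{\psi} U \otimes \1 \ket{\psi}|$, that is, to computing the distance $\nu(U \otimes \1)$ of the numerical range $W(U \otimes \1)$ to the origin. Compactness of the numerical range guarantees that this minimum is attained, which justifies writing $\ket{\psi_0} \in \arg\min_{\ket{\psi}} |\bra{\psi} U \ket{\psi}|$.

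The key step is then the tensor-product invariance of the numerical range recorded in \eqref{properties-q-nr-tensor-product}, specialized to $q=1$: it yields $W(U \otimes \1) = W(U)$ and therefore $\nu(U \otimes \1) = \nu(U)$. Interpreted operationally, adjoining an auxiliary system of arbitrary dimension does not lower the attainable value of $|\bra{\psi} U \ket{\psi}|$, so the infimum over entangled inputs coincides with the infimum over product states supported on the original $d$-dimensional space. Consequently an optimal input state can be chosen in $\C^d$ with auxiliary dimension equal to one, and the optimal strategy furnished by Theorem~\ref{thm:unitary_channels} uses no entanglement.

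I expect essentially no obstacle here, since the entire content is packaged into the tensor-product invariance of the numerical range, which is a standard property. The only point deserving a line of care is the monotonicity of the error formula in $|\bra{\psi} U \ket{\psi}|$, as this is what licenses the reduction from minimizing $p_{\text{II}}$ to minimizing the modulus of a numerical-range element; once that monotonicity is verified, the corollary follows immediately from $\nu(U \otimes \1) = \nu(U)$.
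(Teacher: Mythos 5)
Your proposal is correct and follows essentially the same route as the paper: the proof of Theorem~\ref{thm:unitary_channels} reduces the minimization over input states to computing $\nu(U \otimes \1)$ and then invokes the tensor-product invariance of Eq.~\eqref{properties-q-nr-tensor-product} (at $q=1$) to conclude $\nu(U \otimes \1) = \nu(U)$, exactly as you do. Your extra remark on the monotonicity of the error formula in $|\bra{\psi} U \ket{\psi}|$ is a point the paper leaves implicit, but it is the same argument.
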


The following remark states that while considering the input state to the
certification scheme, we can restrict our attention to pure states only.
\begin{remark}
Without loss of generality, we can consider only
pure input states. The minimal value of linear objective function
\begin{equation}
\rho \mapsto \tr \left( \Omega \Phi_U(\rho) \right)
\end{equation} over a convex set $\{\rho \in \DD_d: \tr \left( \Omega
\Phi_\1(\rho) \right) \ge 1- \delta \}$ is achieved on the extremal points.
\end{remark}
\subsection{Connection with $q$-numerical range}\label{ssec:q-range} There
exists a close relationship between the above results and the definition of
numerical range, which can be seen from the proof of
Theorem~\ref{thm:unitary_channels}. It the work \cite{duan2009perfect} the
authors show the connection between the discrimination of quantum channels and
$q$-numerical range. In this section we show the connection between
certification of unitary channels and $q$-numerical range. Recall the definition
of $q$-numerical range.
\begin{equation}
W_q(X) := \{  \bra{\psi} X \ket{\varphi}:  \braket{\psi}{\varphi} =q \}.
\end{equation}
Using this notion and the notation introduced in Eq.~\eqref{eq:dist_q_nr_to_zero_def} we can rewrite our results for the probability
of the type II error from Theorem \ref{thm:unitary_channels} as
\begin{equation}\label{eq:unitary_result_alternative}
p_\text{II} = \nu^2_{\sqrt{1-\delta}}\left(U \otimes \1\right)
= \nu^2_{\sqrt{1-\delta}}\left(U \right).
\end{equation}
An independent derivation of the above formula is presented in Appendix~\ref{app:q-numerical-range}.

Let $\Theta$ be the angle between two most distant eigenvalues of a unitary
matrix $U$. Then, from the above discussion we can draw a conclusion that for
any statistical significance $\delta \in (0,1]$, if $2
\arccos\left(\sqrt{\delta}\right) \leq \Theta < \pi $, then although $\Phi_U$
and $\Phi_\1$ cannot be distinguished perfectly, they can be certified with
$p_{\text{II}} = 0 $. In other words, the numerical range $W(U)$ does not
contain zero but $\sqrt{1-\delta}$-numerical range, $W_{\sqrt{1-\delta}} (U)$,
does contain zero. The situation changes when $2
\arccos\left(\sqrt{\delta}\right) > \Theta$. Then, both numerical range $W(U)$
and $\sqrt{1-\delta}$-numerical range $W_{\sqrt{1-\delta}} (U)$ do not contain
zero. This is presented in Fig.~\ref{fig:q-numerical-range}.

\begin{figure}[h!]
\includegraphics[scale=0.7]{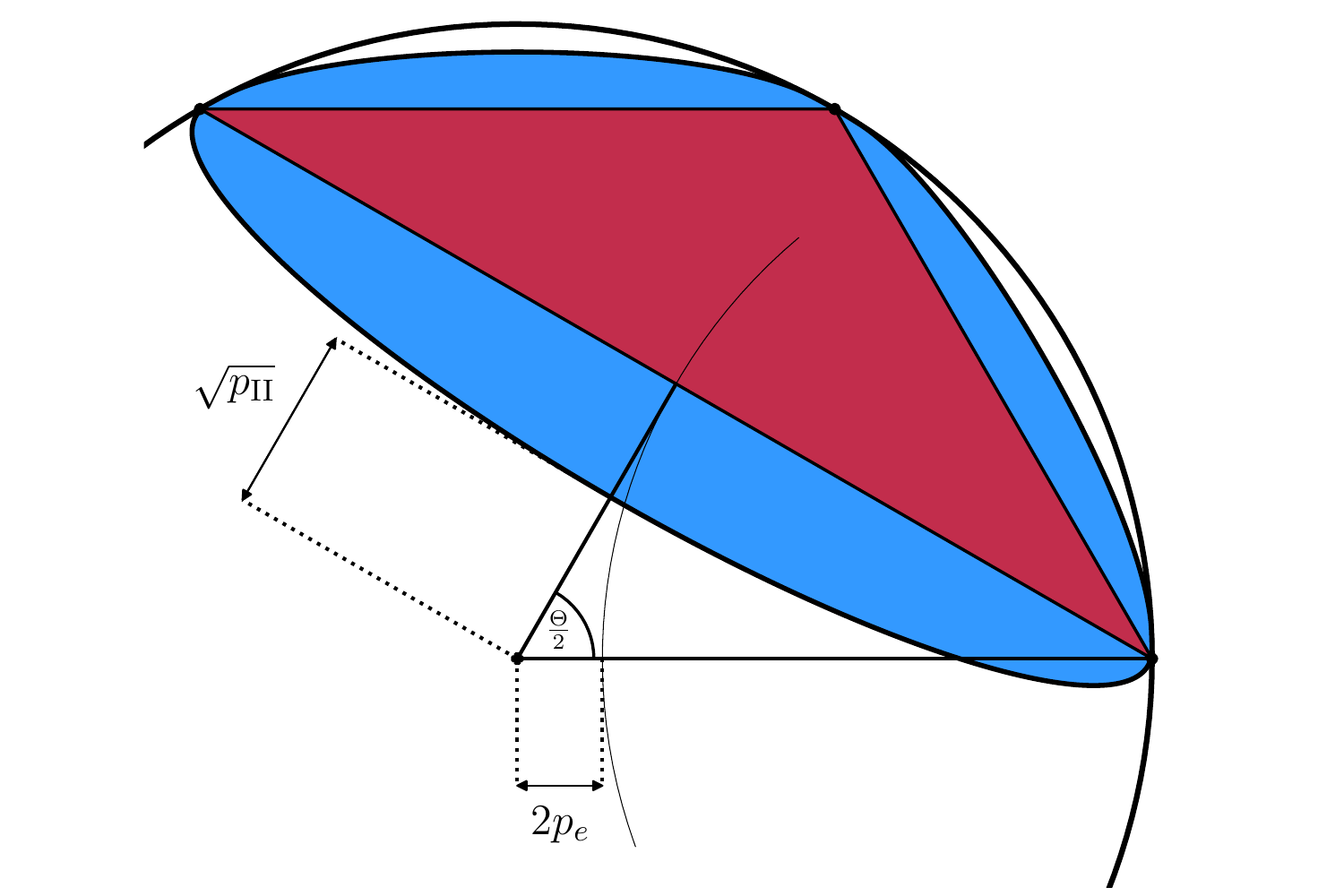}
\caption{Numerical range $W(U)$ (red triangle) and $\sqrt{1-\delta}$-numerical
range $W_{\sqrt{1-\delta}}(U)$ (blue oval)  of $U \in \UU_3 $ with
eigenvalues $1, \ee^{ \frac{\pi \ii}{3}}$ and $\ee^{ \frac{2\pi \ii}{3}}$ with
statistical significance $\delta = 0.05$. The value
$p_e$ is the probability of incorrect symmetric discrimination of
channels $\Phi_\1$ and $\Phi_U$.}
\label{fig:q-numerical-range}
\end{figure}

Now we will work towards the construction of the optimal strategy, which will be
stated as a corollary. Besides finding the optimal measurement which was shown
in previous section we will show a closed-form expression of the optimal input
state. For this purpose we will make use of the spectral decomposition of a
unitary matrix $U$ given by
\begin{equation}
U = \sum_{i=1}^d \lambda_i \ketbra{x_i}{x_i}.
\end{equation}
Let $\lambda_1, \lambda_d$ be a pair of the most distant eigenvalues of $U$. The
following corollary is analogous to the corollary from the previous section as
it presents the optimal strategy for the certification of unitary channels.

\begin{corollary}\label{cor:unitary_channels}
By $\ket{\psi_0}$ we will denote the optimal state for two-point
certification of
unitary channels and let $\ket{\varphi} \coloneqq U \ket{\psi_0}$. Then,
the
optimal strategy yields
\begin{enumerate}
\item If $0 \in W_{\sqrt{1-\delta}}(U) $, then we have two
cases
\begin{itemize}
\item  if $0 \not\in W(U)$, then we can take
\begin{equation}\label{discriminator}
\ket{\psi_0} = \frac{1}{\sqrt{2}} \ket{x_1} + \frac{1}{\sqrt{2}}
\ket{x_d}
\end{equation}
where  $\ket{x_1}$, $\ket{x_d}$ are eigenvectors corresponding to the
pair of the most distant eigenvalues $\lambda_1$, $\lambda_d$ of $U$. The
optimal measurement is given by
$\Omega_0= \ketbra{\omega}{\omega}$, where $ \ket{\omega}
= \frac{\ket{\widetilde{\omega}}}{||\ket{\widetilde{\omega}}||} $,  $
\ket{\widetilde{\omega}} = \ket{\psi_0} - \braket{\varphi}{\psi_0}
\ket{\varphi}$,
\item if $0 \in W(U)$, then we have perfect symmetric  distinguishability.
Moreover,  there
exists the probability vector $p$ such that $
\sum_{i=1}^d \lambda_i
p_i = 0 $ and we obtain that
\begin{equation}
\ket{\psi_0} = \sum_{i=1}^d \sqrt{p_i} \ket{x_i}.
\end{equation}
Analogously, we choose the optimal measurement given by  $\Omega_0=
\ketbra{\omega}{\omega}$, where $ \ket{\omega} =
\frac{\ket{\widetilde{\omega}}}{||\ket{\widetilde{\omega}}||} $, $
\ket{\widetilde{\omega}} = \ket{\psi_0} - \braket{\varphi}{\psi_0}
\ket{\varphi}$.
It easy to see that in this case we have $\Omega_0 =
\ketbra{\psi_0}{\psi_0}$.
\end{itemize}
\item
If $0 \not\in W_{\sqrt{1-\delta}}(U) $, then
the discriminator is given by Eq.~\eqref{discriminator}, whereas the optimal
measurement is can be expressed as
$\Omega_0 = \ketbra{\omega}{\omega}$ for
$\ket{\omega} = \sqrt{1-\delta} \ket{\psi_0} - \sqrt{\delta} \ket{
\psi_0^\perp}$, $\ket{\psi_0^\perp} =
\frac{\ket{\widetilde{\psi_0^\perp}}}{||
\ket{\widetilde{\psi_0^\perp} }||} $,
where $ \ket{\widetilde{\psi_0^\perp}} =
\ket{\varphi} - \braket{\psi_0}{\varphi} \ket{\psi_0}$.
\end{enumerate}
\end{corollary}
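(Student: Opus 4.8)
The plan is to build on the reduction carried out in the proof of Theorem~\ref{thm:unitary_channels}: after fixing an optimal input state $\ket{\psi_0}\in\arg\min_{\ket\psi}|\bra\psi U\ket\psi|$, the certification of $\Phi_\1$ against $\Phi_U$ becomes the certification of the two pure states $\ket{\psi_0}$ and $\ket\varphi\coloneqq U\ket{\psi_0}$, for which Corollary~\ref{cor:_pure_state} already supplies the optimal effect. Thus the only new work is to exhibit a concrete minimiser $\ket{\psi_0}$ and to match the three displayed regimes to the two branches of Corollary~\ref{cor:_pure_state}. First I would record the bookkeeping identity $|\braket{\psi_0}{\varphi}|=|\bra{\psi_0}U\ket{\psi_0}|=\nu(U)$, so that the threshold $\sqrt\delta$ of Theorem~\ref{thm_pure_state} is governed by $\nu(U)$. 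Comparing the case split of Theorem~\ref{thm_pure_state} with the formula $p_\text{II}=\nu^2_{\sqrt{1-\delta}}(U)$ of Eq.~\eqref{eq:unitary_result_alternative} then yields the two equivalences I will use throughout: $0\in W_{\sqrt{1-\delta}}(U)\Leftrightarrow\nu(U)\le\sqrt\delta$, and trivially $0\in W(U)\Leftrightarrow\nu(U)=0$.

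Next I would identify $\ket{\psi_0}$. Since $U$ is normal, the spectral decomposition $U=\sum_i\lambda_i\ketbra{x_i}{x_i}$ gives $\bra\psi U\ket\psi=\sum_i p_i\lambda_i$ with $p_i=|\braket{x_i}{\psi}|^2$ a probability vector, hence $W(U)=\mathrm{conv}\{\lambda_1,\dots,\lambda_d\}$ and minimising $|\bra\psi U\ket\psi|$ is exactly the distance $\nu(U)$ from the origin to this polygon inscribed in the unit circle. If $0\in W(U)$, convexity supplies a probability vector $p$ with $\sum_i p_i\lambda_i=0$, and $\ket{\psi_0}=\sum_i\sqrt{p_i}\ket{x_i}$ then satisfies $\bra{\psi_0}U\ket{\psi_0}=0$. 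In this subcase $\braket{\varphi}{\psi_0}=\overline{\bra{\psi_0}U\ket{\psi_0}}=0$, so the outputs $\ket{\psi_0}$ and $\ket\varphi$ are orthogonal, which already forces perfect distinguishability; moreover $\ket{\widetilde{\omega}}=\ket{\psi_0}-\braket{\varphi}{\psi_0}\ket\varphi=\ket{\psi_0}$ collapses the optimal effect to $\Omega_0=\ketbra{\psi_0}{\psi_0}$.

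The geometric heart of the proof, and the step I expect to require the most care, is the complementary case $0\notin W(U)$. Then all eigenvalues lie in an open half-plane bounded by a line through the origin, equivalently within a minimal arc of angular length $\Theta<\pi$ whose two endpoints are precisely the most distant pair $\lambda_1,\lambda_d$. Rotating so that the bisector of this arc points along the positive real axis, every eigenvalue has argument in $[-\Theta/2,\Theta/2]$ and hence real part at least $\cos(\Theta/2)>0$; consequently every $z\in W(U)$ obeys $\mathrm{Re}(z)\ge\cos(\Theta/2)$, so $|z|\ge\cos(\Theta/2)$ with equality only at the chord midpoint $\tfrac12(\lambda_1+\lambda_d)$. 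This shows $\nu(U)=\cos(\Theta/2)$, attained by the balanced superposition $\ket{\psi_0}=\tfrac1{\sqrt2}(\ket{x_1}+\ket{x_d})$ of Eq.~\eqref{discriminator}. The delicate points are verifying that the endpoints of this minimal arc really are the most distant pair and that its length is strictly below $\pi$; both hold because $0\notin W(U)$ places all eigenvalues in a common open half-plane through the origin, hence inside an arc shorter than $\pi$, on which Euclidean distance is monotone in angular separation so that the endpoints are the farthest-apart eigenvalues.

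Finally I would plug $\ket{\psi_0}$ and $\ket\varphi=U\ket{\psi_0}$ into Corollary~\ref{cor:_pure_state}. When $0\in W_{\sqrt{1-\delta}}(U)$, that is $\nu(U)\le\sqrt\delta$, the first branch applies verbatim and returns $\Omega_0=\ketbra{\omega}{\omega}$ with $\ket{\widetilde{\omega}}=\ket{\psi_0}-\braket{\varphi}{\psi_0}\ket\varphi$, which covers both subcases of the first item. When $0\notin W_{\sqrt{1-\delta}}(U)$, that is $\nu(U)>\sqrt\delta$, we have a fortiori $0\notin W(U)$, so $\ket{\psi_0}$ is again Eq.~\eqref{discriminator}, and the second branch of Corollary~\ref{cor:_pure_state} yields the tilted effect $\ket\omega=\sqrt{1-\delta}\ket{\psi_0}-\sqrt\delta\ket{\psi_0^\perp}$ with $\ket{\psi_0^\perp}$ the normalised component of $\ket\varphi$ orthogonal to $\ket{\psi_0}$. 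This exhausts every branch of the corollary.
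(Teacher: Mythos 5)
Your proposal is correct, and its skeleton --- fix $\ket{\psi_0}\in\arg\min_{\ket\psi}|\bra\psi U\ket\psi|$ via Theorem~\ref{thm:unitary_channels}, then feed the pair $\ket{\psi_0}$, $\ket\varphi=U\ket{\psi_0}$ into Corollary~\ref{cor:_pure_state} --- is exactly the assembly the paper intends, since the corollary is stated there without a standalone proof. Where your route genuinely differs is in how the geometric facts are established. The paper's justification that the balanced superposition of the two extreme eigenvectors is optimal, and that the case split corresponds to membership of $0$ in $W(U)$ versus $W_{\sqrt{1-\delta}}(U)$, is distributed between the discussion of Section~\ref{ssec:q-range} (the angle $\Theta$ between the most distant eigenvalues) and Appendix~\ref{app:q-numerical-range}, where $\nu_{\sqrt{1-\delta}}(U)$ is computed by analyzing the elliptical boundary components of the $q$-numerical range; the form of the optimal state is additionally borrowed from the discrimination literature. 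You instead argue directly at $q=1$: normality gives $W(U)=\mathrm{conv}\{\lambda_1,\dots,\lambda_d\}$, separation places all eigenvalues in an arc of length $\Theta<\pi$ whose endpoints are the most distant pair, and the half-plane bound $\mathrm{Re}(z)\ge\cos(\Theta/2)$ pins down $\nu(U)=\cos(\Theta/2)=|\lambda_1+\lambda_d|/2$, attained by the balanced superposition --- a more elementary and self-contained argument. Likewise, your derivation of the equivalence $0\in W_{\sqrt{1-\delta}}(U)\Leftrightarrow\nu(U)\le\sqrt\delta$ by comparing the case split of Theorem~\ref{thm_pure_state} with Eq.~\eqref{eq:unitary_result_alternative} is legitimate and avoids redoing the ellipse computation, since that formula is established independently in the appendix. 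Two one-line checks are worth making explicit to close the argument: the second-branch expression in Theorem~\ref{thm_pure_state} is strictly positive once the overlap exceeds $\sqrt\delta$ (so that $p_{\text{II}}=0$ occurs \emph{only} in the first branch, making your equivalence a true ``iff''), and that expression is increasing in the overlap (which is what lets the optimization over input states reduce to minimizing $|\bra\psi U\ket\psi|$, as Theorem~\ref{thm:unitary_channels} implicitly uses).
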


\begin{remark}
Observe that the optimal input state $\ket{\psi_0}$ does not depend on
$\delta$, while the optimal measurement $\Omega_0$ does depend on the parameter
$\delta$ in each case. It is also worth noting that the optimal state in
quantum hypothesis testing is of the same form as in the problem of unitary
channel discrimination.
\end{remark}
\section{Two-point certification of Von Neumann
measurements}\label{sec:von_neumann_measurements}

In this section we will focus on the certification of von Neumann measurements.
Recall that every quantum measurement can be associated with a
measure-and-prepare quantum channel. Therefore, while studying the certification
of quantum measurements we will often take advantage of the certification of
quantum channels discussed in the previous section, where we assumed that one
of the unitaries was the identity.
Similarly, also in the case of certification of von Neumann measurements
we will assume that one of the measurements is in
the computational basis.  Hence, we will be certifying the measurement $\PP_\1$
under the alternative hypothesis $\PP_U$.

While certifying quantum channels, the most general scenario allows for the use
of entanglement by adding an additional system. Hence, in our case of
certification of von Neumann measurements, the hypothesis $H_0$ yields that the
unknown measurement is $\PP_{\1} \otimes \1$ whereas for the alternative
hypothesis yields that the measurement is $\PP_{U} \otimes \1$.

Now we recall some technical tools which will be used to prove the main result
of this work. It was shown in~\cite[Theorem 1]{puchala2018strategies} that the
diamond norm distance between von Neumann measurements $\mathcal{P}_U$ and
$\PP_\1$ is given by

\begin{equation}\label{diamond_norm_minue} ||\PP_{U} -
\PP_\1||_\diamond = \min_{E \in \mathcal{DU}_d} ||\Phi_{UE} - 
\Phi_\1||_\diamond,
\end{equation}
where $\mathcal{DU}_d$ is the subgroup  of  diagonal unitary matrices of dimension
$d$. As we can see, the problem of discrimination of von Neumann measurements
reduces to the problem of discrimination of unitary channels. From
~\cite{watrous2018theory} we know that the diamond norm distance between two
unitary channels $\Phi_U$ and $\Phi_\1$ is expressed as
\begin{equation}\label{diamond_norm_nikwadrat}
|| \Phi_U - \Phi_\1 ||_\diamond = 2 \sqrt{1 - \nu^2\left(U \right)},
\end{equation}
where $\nu (U) = \min \{ |x|: x \in W(U) \}$.

\subsection{Certification scheme} The scenario of certification of von Neumann
measurements is as follows. We prepare some (possibly entangled) input state
$\ket{\psi}$ and, as previously, we perform the unknown von Neumann measurement
on one part of it. Then, after performing the measurement, the null hypothesis
$H_0$ corresponds to the state $\left(\PP_\1 \otimes \1 \right) (\proj{\psi})$,
while the alternative hypothesis $H_1$ corresponds to the state $\left(\PP_U
\otimes \1 \right) (\proj{\psi})$. Our goal is to find an optimal input state
and a measurement for which the probability of the type II error is saturated,
while the statistical significance $\delta$ is assumed. The results of
minimization are stated as the following theorem.

\begin{theorem}\label{thm_measurements}
Consider the problem of two-point certification of von Neumann measurements
with hypotheses
\begin{equation}\label{eq:hypotheses_measurements}
\begin{split}
&H_0: \  \PP_\1 \otimes \1 \\
&H_1: \  \PP_U \otimes \1.
\end{split}
\end{equation}
and statistical significance $\delta \in [0,1]$. Then, for the most powerful test, the
probability of the type II error yields
\begin{equation}
p_{\text{II}} = \max_{E \in \diaguni_d} \nu^2_{\sqrt{1-\delta}} \left(UE\right).
\end{equation}
\end{theorem}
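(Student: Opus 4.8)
The plan is to prove the two matching inequalities $p_{\text{II}} \ge \max_{E\in\diaguni_d}\nu^2_{\sqrt{1-\delta}}(UE)$ and $p_{\text{II}} \le \max_{E\in\diaguni_d}\nu^2_{\sqrt{1-\delta}}(UE)$, after first reducing to pure input states exactly as in the remark following Theorem~\ref{thm:unitary_channels}: the objective is linear in the input state over a convex feasible set, so an extremal, hence pure, optimizer exists. The structural fact driving everything is that a von Neumann measurement is blind to the phases of its effects. Writing $\PP_\1$ for the measure-and-prepare channel in the computational basis, one checks that $\PP_U = \PP_\1\circ\Phi_{U^\dagger}$ and, since $\PP_\1$ absorbs any diagonal unitary ($\PP_\1(E^\dagger X E)=\PP_\1(X)$ for $E\in\diaguni_d$, as the diagonal phases cancel on the diagonal), in fact $\PP_U = \PP_\1\circ\Phi_{(UE)^\dagger}$ for \emph{every} $E\in\diaguni_d$. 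Thus the single channel $\PP_U$ is simultaneously a dephased version of each unitary channel $\Phi_{(UE)^\dagger}$.

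For the lower bound I would exploit this representation directly. Fix any $E\in\diaguni_d$. Because $\PP_\1\otimes\1$ is a self-adjoint, unital channel with $(\PP_\1\otimes\1)\circ(\PP_\1\otimes\1)=\PP_\1\otimes\1$, any certification strategy $(\ket\psi,\Omega)$ for the measurements can be rewritten, via $\tilde\Omega:=(\PP_\1\otimes\1)(\Omega)$, as a certification strategy for the unitary channels $\Phi_\1$ versus $\Phi_{(UE)^\dagger}$ with \emph{identical} type~I and type~II errors; here $0\le\tilde\Omega\le\1$ since $\PP_\1\otimes\1$ sends effects to effects. Hence the achievable error region for the measurements is contained in that for these unitaries, and Theorem~\ref{thm:unitary_channels} (in the form \eqref{eq:unitary_result_alternative}) gives $p_{\text{II}} \ge \nu^2_{\sqrt{1-\delta}}\bigl((UE)^\dagger\bigr)=\nu^2_{\sqrt{1-\delta}}(UE)$, where the last equality uses $W_q(V^\dagger)=\overline{W_q(V)}$ for real $q$, so that $\nu_q(V^\dagger)=\nu_q(V)$. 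Since $E$ was arbitrary, taking the maximum yields $p_{\text{II}}\ge\max_{E\in\diaguni_d}\nu^2_{\sqrt{1-\delta}}(UE)$.

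For the upper bound I would analyse the output states for a fixed pure input $\ket\psi=\sum_i\ket i\otimes\ket{a_i}$. Applying $\PP_\1\otimes\1$ makes both hypotheses block-diagonal in the measured register, $(\PP_\1\otimes\1)(\proj\psi)=\sum_i\proj i\otimes\proj{a_i}$ and $(\PP_U\otimes\1)(\proj\psi)=\sum_i\proj i\otimes\proj{b_i}$ with $\ket{b_i}=(\bra i U^\dagger\otimes\1)\ket\psi$, so an optimal effect may be taken block-diagonal, $\Omega=\sum_i\proj i\otimes\Omega_i$ (replacing $\Omega$ by its register-dephasing changes neither error, as both states commute with that dephasing). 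The certification then splits into $d$ pure-state problems, distinguishing $\ket{a_i}$ from $\ket{b_i}$ in block $i$, coupled only through the shared budget $\sum_i p_{\text{I}}^{(i)}\le\delta$; each block is governed by Theorem~\ref{thm_pure_state}. The key observation is that rephasing the input blocks $\ket{a_i}\mapsto e^{\ii\theta_i}\ket{a_i}$ leaves the $H_0$ output invariant while conjugating the $H_1$ structure by a diagonal unitary, so the maximization over $E\in\diaguni_d$ in the claimed formula is exactly the optimization over these relative phases, compatible with the reduction \eqref{diamond_norm_minue} of \cite{puchala2018strategies}.

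The hard part is this last optimization: I must show that jointly minimizing over the input $\ket\psi$ and over the allocation of the type~I budget across blocks produces precisely $\max_{E\in\diaguni_d}\nu^2_{\sqrt{1-\delta}}(UE)$, matching the lower bound. Concretely, I would take $E^\ast\in\arg\max_E\nu^2_{\sqrt{1-\delta}}(UE)$ and choose the input adapted to the two most distant eigenvalues of $UE^\ast$, as in Corollary~\ref{cor:unitary_channels}, so that the forced computational-basis dephasing destroys no information relevant to the optimal effect; verifying that the resulting blockwise strategy is feasible (type~I error $\le\delta$) and attains type~II error $\nu^2_{\sqrt{1-\delta}}(UE^\ast)$ is the crux, and is where the phase freedom of $\diaguni_d$ must be shown to be exactly the resource that closes the gap with the lower bound.
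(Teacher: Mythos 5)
Your lower bound is correct and is, in substance, the paper's own argument: writing $\PP_U=\PP_\1\circ\Phi_{(UE)^\dagger}$ for every $E\in\diaguni_d$ and pulling the effect back through the self-adjoint, unital dephasing channel via $\tilde\Omega=(\PP_\1\otimes\1)(\Omega)$ is exactly the data processing inequality of Lemma~\ref{data-process-in}, and combining it with Theorem~\ref{thm:unitary_channels} and $\nu_q(V^\dagger)=\nu_q(V)$ gives $p_{\text{II}}\ge\max_{E\in\diaguni_d}\nu^2_{\sqrt{1-\delta}}(UE)$. Your block decomposition for the upper bound (entangled input $\ket\psi=\sum_i\ket{i}\otimes\ket{a_i}$, block-diagonal effect, $d$ pure-state subproblems coupled through the type~I budget) is also the right frame and matches the structure of the paper's construction.

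The genuine gap is at the point you yourself flag as ``the crux'': you never produce a feasible strategy attaining the bound, and the input you suggest would not do so. The state of Corollary~\ref{cor:unitary_channels}, $\frac{1}{\sqrt2}(\ket{x_1}+\ket{x_d})$ built from the extreme eigenvectors of $UE^\ast$, carries no ancilla; without entanglement both hypotheses output diagonal states, $\sum_i|\braket{i}{\psi}|^2\proj{i}$ versus $\sum_i|\braket{u_i}{\psi}|^2\proj{i}$, and the problem collapses to classical testing of two probability vectors, which in general cannot reach $\nu^2_{\sqrt{1-\delta}}(UE^\ast)$ --- indeed the paper emphasizes that entanglement strictly helps for measurement certification. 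The missing ingredient is the nontrivial existence result the paper imports from \cite{puchala2018strategies} (Lemma~\ref{existence-discriminator}): when $\|\PP_U-\PP_\1\|_\diamond<2$ and $E_0$ attains the minimum in Eq.~\eqref{diamond_norm_minue}, there exist states $\rho_1,\rho_d$ supported on the eigenspaces of the extreme eigenvalues $\lambda_1,\lambda_d$ of $UE_0$ with $\diag(\rho_1)=\diag(\rho_d)$. Taking $\rho_0=\frac12(\rho_1+\rho_d)$ and the entangled input $\ket{\psi_0}=\sum_i\sqrt{\rho_0}\,\ket{i}\otimes\ket{i}$, this equal-diagonal property is precisely what forces every conditional pair $\sqrt{\rho_0}\ket{i}/\|\sqrt{\rho_0}\ket{i}\|$, $\sqrt{\rho_0}U\ket{i}/\|\sqrt{\rho_0}\ket{i}\|$ to have the \emph{same} overlap $\left|\frac{\lambda_1+\lambda_d}{2}\right|$ (Corollary~\ref{lem:properties-of-discriminator}), so that each block can be run at full significance $\delta$ and Theorem~\ref{thm_pure_state}, applied blockwise, yields exactly $\nu^2_{\sqrt{1-\delta}}(UE_0)$. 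The ``phase freedom of $\diaguni_d$'' you invoke does not by itself supply this; it only tells you which unitary $UE_0$ to target. Finally, you leave the case $\|\PP_U-\PP_\1\|_\diamond=2$ unhandled: there $\nu(UE_0)=0$, the lemma above does not apply, and the paper instead takes a state saturating the diamond norm together with the projector onto the support of $(\PP_\1\otimes\1)(\proj{\psi_0})$, and separately argues that $0\in W(UE_0)$ implies $0\in W_{\sqrt{1-\delta}}(UE_0)$ so that both sides vanish.
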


It is worth mentioning that we do not make any assumptions on the dimension of
the auxiliary system, however its dimension is obviously upper-bounded by the
dimension of the input states. Additionally, the dimension of the auxiliary
system can be reduced to the Schmidt rank of the input state
$\ket{\psi}$~\cite[Proposition 4]{puchala2018strategies}. It is worth mentioning
here that in the certification of von Neumann measurements entanglement can
significantly improve the outcome of the protocol while in the case of unitary
channel certification it provides no benefit.

In contrast to the certification of unitary channels, the output
states $(\PP_\1 \otimes \1)(\proj{\psi})$ and $(\PP_U
\otimes \1)(\proj{\psi})$ are not necessarily pure. Hence the proof of the
Theorem \ref{thm_measurements} requires more advanced techniques. Luckily, we
still can make use of
the calculations from Section~\ref{sec:unitary_channels_two_points}, due to the
fact that formally mixed states
$(\PP_\1 \otimes \1)(\proj{\psi})$ and
$(\PP_U \otimes \1)(\proj{\psi})$,
conditioned by obtaining the label $i \in \{1,\ldots,d\}$, are pure.

\begin{proof}[Proof of Theorem~\ref{thm_measurements}]
In the scheme of certification of von Neumann measurements the optimized
probability of type II error can be expressed as
\begin{equation}
p_{\text{II}} \coloneqq
\min_{\ket{\psi}} \min_{\Omega: p_{\text{I}}^{\ket{\psi}}(\Omega) \le \delta}
\tr
\left( \Omega \left( \PP_U \otimes \1 \right) (\ketbra{\psi}{\psi}) \right).
\end{equation}
Our goal is to prove that
\begin{equation}\label{eq:to_prove_equality}
p_{\text{II}} = \max_{E \in \diaguni_d} \nu^2_{\sqrt{1-\delta}}
\left(UE\right).
\end{equation}

The proof is divided into two parts. In the first part we will utilize data
processing inequality presented in Lemma~\ref{data-process-in} in Appendix~\ref{app:th}. Thanks to that, we
will show the lower bound for $p_{\text{II}}$. In the second part we will use
some technical lemmas presented in Appendix~\ref{app:th} and we will utilize the results
from ~\cite{puchala2018strategies} to show the upper bound for $p_{\text{II}}$.

\hspace{1cm}
\subsubsection*{The lower bound} This part of the proof mostly will be based on
data processing inequality.
To show that
\begin{equation}
p_{\text{II}} \geq \max_{E \in \diaguni_d} \nu^2_{\sqrt{1-\delta}}
\left(UE\right)
\end{equation}
let us begin with an observation that every quantum von Neumann
measurement $\PP_U$ can be rewritten as $\Delta \circ \Phi_{(UE)^\dagger}$,
where $\Delta$ denotes the completely dephasing channel and $E \in \diaguni_d$.
Therefore, utilizing
data processing inequality in Lemma~\ref{data-process-in} in Appendix~\ref{app:th}, along with the
certification scheme of unitary channels in Theorem~\ref{thm:unitary_channels},
the optimized  probability of the type II error is lower-bounded by
\begin{equation}
p_{\text{II}}\geq \min_{\ket{\psi}} \min_{\Omega:
p_{\text{I}}^{\ket{\psi}}(\Omega) \le \delta }
\tr(\Omega (\Phi_{(UE)^\dagger}\otimes
\1)(\proj{\psi}) ) =\nu^2_{\sqrt{1-\delta}} \left( (UE)^\dagger \right) 
=\nu^2_{\sqrt{1-\delta}} \left( UE \right)
\end{equation}
which holds for each $E \in \diaguni_d$.
Hence, maximizing the value of
$\nu^2_{\sqrt{1-\delta}} \left( UE \right)$
over $E \in \diaguni_d$  leads to the lower bound of the form
\begin{equation}
p_{\text{II}}\geq\max_{E \in \diaguni_d}
\nu^2_{\sqrt{1-\delta}} \left( UE \right).
\end{equation}

\hspace{1cm}
\subsubsection*{ The upper bound}

Now we proceed to proving the upper bound.
The proof of the inequality
\begin{equation}
p_{\text{II}} \leq \max_{E \in \diaguni_d} \nu^2_{\sqrt{1-\delta}}
\left(UE\right)
\end{equation}
 will be divided into two cases depending on
diamond norm distance between considered measurements $\PP_U$
and $\PP_\1$.
In either case we will construct a strategy, that is choose a state
$\ket{\psi_0}$ and a measurement $\Omega_0$. As for every choice of $\ket{\psi}
$ and $\Omega$ it holds that
\begin{equation}
p_{\text{II}} \le \tr\left(\Omega (\PP_U \otimes \1)(\proj{\psi})\right),
\end{equation}
we will show that for some fixed $\ket{\psi_0} $ and $\Omega_0$ it holds that
\begin{equation}\label{eq:to_be_proved}
\tr\left(\Omega_0 (\PP_U \otimes \1)(\proj{\psi_0})\right) = \max_{E \in
\diaguni_d}
\nu^2_{\sqrt{1-\delta}} \left( UE \right).
\end{equation}

First we focus on the case when  $\|  \PP_U - \PP_\1
\|_\diamond = 2$. We take a state $\ket{\psi_0}$ for which it holds that
\begin{equation}\label{eq:proof_when_diamond_norm_equals_2}
\|  \PP_U - \PP_\1
\|_\diamond  = \| \left(\left(  \PP_U - \PP_\1 \right)\otimes \1 \right)
(\proj{\psi_0})
\|_1.
\end{equation}
Then, the output states $(\PP_U
\otimes \1)(\proj{\psi_0})$ and $(\PP_\1\otimes \1)(\proj{\psi_0})$ are
orthogonal and by taking the measurement $\Omega_0$ as the projection onto the
support of  $(\PP_\1\otimes \1)(\proj{\psi_0})$ we obtain
\begin{equation}\label{eq:meeee}
\tr\left(\Omega_0 (\PP_U \otimes \1)(\proj{\psi_0})\right) = 0.
\end{equation}
Let us recall that

\begin{equation}\label{diamond_norm_nikwadrat}
|| \Phi_U - \Phi_\1 ||_\diamond = 2 \sqrt{1 - \nu^2\left(U \right)},
\end{equation}
where $\nu (U) = \min \{ |x|: x \in W(U) \}$ and it holds that
\cite{puchala2018strategies}
\begin{equation}\label{diamond_norm_minue} ||\PP_{U} -
\PP_\1||_\diamond = \min_{E \in \mathcal{DU}_d} ||\Phi_{UE} - \Phi_1||_\diamond,
\end{equation}
where $\mathcal{DU}_d$ is the subgroup  of  diagonal unitary matrices of
dimension $d$.

Then, utilizing Eq.~\eqref{diamond_norm_nikwadrat} and
\eqref{diamond_norm_minue}
we obtain that
$\max_{E \in \diaguni_d} \nu^2\left( UE \right)=0$.
Therefore, by the property that $0 \in W_{\sqrt{1-\delta}}(UE)$ whenever $0 \in
W(UE)$ (see Appendix~\ref{app:q-numerical-range}), we have that
\begin{equation}
\max_{E \in \diaguni_d} \nu^2_{\sqrt{1-\delta}} \left( UE \right) =0.
\end{equation}
Secondly, we consider the situation when $\|  \PP_U - \PP_\1 \|_\diamond < 2$.
\begin{equation}\label{eq:vn-0}
E_0 \in \underset{E \in\diaguni_d}{\arg\max} \  \nu \left( UE \right).
\end{equation}
Again, by referring to Eq.~\eqref{diamond_norm_minue} and
\eqref{diamond_norm_nikwadrat} we obtain that $\nu\left( UE_0 \right)>0$.
Let $\lambda_1, \lambda_d$ be a pair of the most distant eigenvalues of
$UE_0$. Note that the following relation holds
\begin{equation}
\nu\left( UE_0 \right)
=\frac{|\lambda_1+\lambda_d|}{2}.
\end{equation}

As the assumptions of the Lemma~\ref{existence-discriminator} in Appendix~\ref{app:th} are saturated for
the defined $E_0$, we consider the input state
\begin{equation}\label{eq:measurement}
\ket{\psi_0}
=\sum_{i=1}^d \sqrt{\rho_0} \ket{i} \otimes \ket{i}
\end{equation}
where the existence of $\rho_0$ together with its properties are described in
Lemma~\ref{existence-discriminator} and Corollary~\ref{exist-discriminator} in Appendix~\ref{lem:properties-of-discriminator}.
Let us define sets
\begin{equation}
\mathcal{C}_i \coloneqq \left\{
\Omega: 0 \le \Omega \le \1, \, \tr \left(
\left(\1- \Omega \right)
\frac{\sqrt{\rho_0}\proj{i} \sqrt{\rho_0}}{\bra{i} \rho_0 \ket{i}}\right) \leq
\delta \right\}
\end{equation}
for each $i$ such that $ \bra{i}\rho\ket{i} \not= 0$.
Now we take the measurement $\Omega_0$ as
\begin{equation}\label{eq:proof_measurement_diamond_norm_less_2}
\Omega_0=
\sum_{i=1}^d \proj{i} \otimes \Omega_{i}^\top
\end{equation}
where  $\Omega_{i}\in \mathcal{C}_i$ is defined as
\begin{equation}\label{eq:vn-3}
\Omega_i \in \arg\min_{\widetilde{\Omega} \in \mathcal{C}_i }
\tr\left(\widetilde{\Omega}
\frac{\sqrt{\rho_0} U \proj{i} U^\dagger \sqrt{\rho_0}}{\bra{i}
\rho_0 \ket{i}}\right)
\end{equation}
for each $ i \in \{ 1,\ldots,d\}$ such that $\bra{i}\rho_0\ket{i} \neq 0$ and
$\Omega_i = 0$ otherwise.

Now we check that the statistical significance is satisfied, that is for the
described strategy we have
\begin{equation}
p_{\text{I}}^{\ket{\psi_0}}(\Omega_0)=1-\tr\left(\Omega_0 (\PP_\1 \otimes
\1)(\proj{\psi_0})\right)=1-\sum_{i=1}^d \tr\left(\Omega_{i} \sqrt{\rho_0}
\proj{i}
\sqrt{\rho_0}\right)\leq \delta.
\end{equation}
Hence, it remains to show that for this setting
\begin{equation}
\tr\left(\Omega_0 (\PP_U \otimes \1)(\proj{\psi_0})\right) =
\max_{E \in \diaguni_d}
\nu^2_{\sqrt{1-\delta}}\left( UE \right).
\end{equation}
Direct calculations reveal that
\begin{equation}
\begin{split}
&\tr\left(\Omega_0 (\PP_U \otimes
\1)(\proj{\psi_0})\right)=\sum_{i=1}^d \tr\left(\Omega_{i} \sqrt{\rho_0}
U\proj{i}U^\dagger
\sqrt{\rho_0}\right)\\
&=\sum_{i=1}^d \bra{i} \rho_0 \ket{i} \tr\left(\Omega_{i} \frac{\sqrt{\rho_0}
U \proj{i} U^\dagger \sqrt{\rho_0}}{\bra{i} \rho_0 \ket{i}}\right).
\end{split}
\end{equation}
Let us define
\begin{equation}
p_{\text{II}}^{|i}  =  \tr\left(\Omega_{i} \frac{\sqrt{\rho_0}
U \proj{i} U^\dagger \sqrt{\rho_0}}{\bra{i} \rho_0 \ket{i}}\right).
\end{equation}

Note that due to Corollary~\ref{lem:properties-of-discriminator} in Appendix~\ref{app:th} the absolute value of the inner 
product between pure 
states $ \frac{\sqrt{\rho_0}\ket{i}}{\| \sqrt{\rho_0}\ket{i} \|}$ and
$ \frac{\sqrt{\rho_0} U \ket{i}}{\| \sqrt{\rho_0}  \ket{i} \|}$
is the same for every $i\in \{1,\ldots,d\}: \bra{i} \rho \ket{i} \not= 0$.
Therefore we can consider the certification of pure states conditioned on the
obtained label $i$ with statistical significance $\delta$.
From the Theorem~\ref{thm_pure_state} we know that $p_\text{II}^{|i}$ depends
only on such an inner product between the certified states, hence
$p_\text{II}^{|i}=p_\text{II}^{|j}$ for each $i,j: \bra{i}\rho \ket{i}, \bra{j}
\rho \ket{j} \not= 0$.
Therefore, we have that
the value of $p_\text{II}^{|i}$ will depend on
$\left| \frac{\lambda_1 + \lambda_d }{2} \right|$.
Thus w.l.o.g.  we can assume that $p_{\text{II}}^{|1}\neq 0$ and hence
\begin{equation}
\sum_{i=1}^d \bra{i}
\rho_0 \ket{i} p_{\text{II}}^{|i}
=p_{\text{II}}^{|1}
=\tr\left(\Omega_{1} \frac{\sqrt{\rho_0}
U \proj{1} U^\dagger \sqrt{\rho_0}}{\bra{1} \rho_0 \ket{1}}\right)
\end{equation}
and in the remaining of the proof we will show that
\begin{equation}
p_{\text{II}}^{|1}= \max_{E \in \diaguni_d}
\nu^2_{\sqrt{1-\delta}}\left( UE \right).
\end{equation}
It is sufficient to study two cases depending on the relation
between
$\sqrt{\delta}$ and the inner product
\begin{equation}
\left| \frac{\bra{1} \rho_0 U \ket{1} }{\bra{1}\rho_0\ket{1}}
\right|=\left|
\frac{\lambda_1
+ \lambda_d }{2} \right|.
\end{equation}

In the case when
$\left| \frac{\lambda_1 + \lambda_d }{2} \right| \leq
\sqrt{\delta}$, then due to Theorem~\ref{thm_pure_state} we get
$p_{\text{II}}^{|1}=0$.
On the other hand, we know that
$0 \in W_{\sqrt{1-\delta}}(UE_0)$ and hence also
\begin{equation}
\max_{E \in \diaguni_d} \nu^2_{\sqrt{1-\delta}} \left( UE \right)=0.
\end{equation}

In the case when $\left| \frac{\lambda_1 + \lambda_d }{2} \right| >
\sqrt{\delta}$, then from Theorem~\ref{thm_pure_state} we
know that
\begin{equation}
p_{\text{II}}^{|1} =  \left( \left| \frac{\lambda_1 + \lambda_d
}{2} \right|\sqrt{1-\delta} -
\sqrt{1-\left| \frac{\lambda_1 + \lambda_d }{2} \right|^2}
\sqrt{\delta}\right)^2.
\end{equation}
On the other hand, for $E_0 \in \diaguni_d$ satisfying
Eq.~\eqref{eq:vn-0} we have
\begin{equation}
\nu^2_{\sqrt{1-\delta}} \left( UE_0 \right) =
\left( \left| \frac{\lambda_1 + \lambda_d
}{2} \right|\sqrt{1-\delta} -
\sqrt{1-\left| \frac{\lambda_1 + \lambda_d }{2} \right|^2}
\sqrt{\delta}\right)^2.
\end{equation}
By the particular choice of $E_0 \in \diaguni_d$, this
value is equal to
$\max_{E \in \diaguni_d}  \nu^2_{\sqrt{1-\delta}} \left( UE \right)$, hence
combining the above equations we finally obtain
\begin{equation}
p_{\text{II}}^{|1}
= \max_{E \in \diaguni_d}  \nu^2_{\sqrt{1-\delta}} \left( UE \right).
\end{equation}
To sum up, we indicated strategies $\Omega_0$ and $ \ket{\psi_0}$ for
which the optimized probability of type II error was equal to $\max_{E \in
\diaguni_d}  \nu^2_{\sqrt{1-\delta}} \left( UE \right)$. Combining this with
the previously proven inequality
\begin{equation}
p_{\text{II}}\ge \max_{E \in \diaguni_d}
\nu^2_{\sqrt{1-\delta}} \left( UE \right)
\end{equation}
gives us Eq.~\eqref{eq:to_prove_equality} and proves that the proposed strategy
$\ket{\psi_0}, \Omega_0$ is optimal.
\end{proof}

\begin{remark}
Similarly to the case of unitary channel certification, the optimal input
state $\ket{\psi_0}$ does not depend on $\delta$, while the optimal measurement
$\Omega_0$ does depend on $\delta$. Moreover, the optimal state has the same
form as in the problem of discrimination of von Neumann measurements.
\end{remark}

Finally, in Algorithm~\ref{algo:cert} present a protocol which describes the
optimal certification strategy based on the proof of
Theorem~\ref{thm_measurements}.

\begin{algorithm}[H]
\SetAlgoLined
\KwIn{Measurement $\PP$, which is either $\PP_U$ or $\PP_\Id$ and statistical 
significance $\delta$ }
\KwOut{Decision: ``Accept $H_0$'' or ``Reject $H_0$''}
  \BlankLine\BlankLine

\nl Initialize input state $\proj{\psi_0} \in \DD_{d^2}$:\\
\eIf{$||\PP_{U} - \PP_\1||_\diamond==2$}{
   $\ket{\psi_0} \coloneqq \underset{\ket{\psi} \in \C^{d^2}: 
   \braket{\psi}{\psi} = 1}{\mathtt{argmax}} 
   \|
   \left(\left(
   \PP_U - \PP_\1
   \right)\otimes \1
   \right)
(\proj{\psi}) \|_1 $\;
   }{
   $\ket{\psi_0} \coloneqq \sum_{i=1}^d \sqrt{\rho_0} \ket{i} 
   \otimes
   \ket{i}$ for $\rho_0$ defined in Corollary~\ref{lem:properties-of-discriminator} in Appendix~\ref{app:th}\;
 }

\nl Perform $\PP$ on the first subsystem of $\proj{\psi_0}$, that is $(\PP 
\otimes \1) (\proj{\psi_0})$\;
\nl Read the measurement label $i \in \{1,\ldots,d\}$\;
\nl Define resulting quantum state $\proj{\psi_i}$ conditioned by label $i$:\\
$\proj{\psi_i} \propto
\left(\bra{i} \otimes \1 \right) \left(\PP 
\otimes \1 \right) (\proj{\psi_0}) \left(\ket{i} \otimes \1 \right) $\;
\nl Prepare the measurement $\{\Gamma_i, \1 -\Gamma_i \}$ conditioned on label 
$i$: \\
\eIf{$||\PP_{U} - \PP_\1||_\diamond==2$}{
   $\Gamma_{i}$ is the projection onto the support of $\left(\bra{i} \otimes \1
   \right) \proj{\psi_0} \left(\ket{i} \otimes \1\right)$\;
   }{$\Gamma_i = \Omega_{i}^\top$ where $\Omega_i$ fullfills 
   Eq.~\eqref{eq:vn-3}\;
 }
\nl Perform the measurement $ \{ \Gamma_i, \1 - \Gamma_i \}$ on 
$\proj{\psi_i}$\;
\nl Read the measurement outcome $k \in \{1,2\}$, where label $k=1$ is 
associated 
with effect $\Gamma_i$ and the label $k=2$ with $\1 - \Gamma_i$\;
\nl \KwResult{}
\eIf{k == 1}{
   \Return ``Accept $H_0$''\;
   }{\Return ``Reject $H_0$''\;
 }
\caption{Optimal strategy for the certification of von Neumann
measurements.}\label{algo:cert}
\end{algorithm}

Below, we provide a simple example of application of Algorithm~\ref{algo:cert}
in certification of a measurement performed in the Hadamard basis.

\begin{example}
We will certify between von Neumann measurements 
$\mathcal{P}_\mathbb{H}$ and $\PP_{\Id}$, where $\mathbb{H}$ is the Hadamard 
matrix. 
\begin{enumerate}[1]
\item We calculate the distance between $\mathcal{P}_\mathbb{H}$ and 
$\PP_{\Id}$. Using semidefinite 
programming~\cite{watrous2012simpler,puchala2018strategies}
we obtain $|| \mathcal{P}_{\mathbb{H}} - 
\PP_\Id ||_\diamond = \sqrt{2}$. 
Observe that matrix $E_0$ minimizing \eqref{diamond_norm_minue} is of the 
form $E_0 = 
\frac{1}{\sqrt{2}}  \left( 
\begin{array}{cc}
 1+i&0 \\
0&-1-i
 \end{array} \right)$, which means
\begin{equation}
||\PP_{\mathbb{H}} - \PP_\Id ||_\diamond = ||\Phi_{\mathbb{H}E_0} - 
\Phi_\Id||_\diamond.
\end{equation}
In order to construct $\rho_0$ we use Lemma~\ref{existence-discriminator} in Appendix~\ref{app:th}. 
There exist states $\rho_1, \rho_2$ of the form
 $\rho_1 
= 
 \frac{1}{2}\left( \begin{array}{cc}
 1&i \\
-i&1
 \end{array} \right)
$ and 
$\rho_2 
= \frac{1}{2}  \left( \begin{array}{cc}
 1&-i \\
i&1
 \end{array} \right) $. Thus, from Corollary~\ref{lem:properties-of-discriminator}  in Appendix~\ref{app:th}
 we have 
 \begin{equation}
 \rho_0 = \frac{1}{2}(\rho_1 + \rho_2) = \frac{1}{2} \left( \begin{array}{cc}
  1&0 \\
 0&1
  \end{array} \right). \end{equation}
Hence, the input state $\ket{\psi_0}$ has a form
\begin{equation}
\ket{\psi_0} \coloneqq \sum_{i=1}^2 \frac{1}{\sqrt{2}} \ket{i} \otimes
   \ket{i}.
\end{equation}
\item 
We perform $\PP$ on the first subsystem of $\proj{\psi_0}$, that is $(\PP 
\otimes \1) (\proj{\psi_0})$.
\item  We read the measurement label either $i= 1$ or $i = 2$.
\item We reduce our problem to certification between states 
$\ket{\psi_i} = \mathbb{H} 
\ket{i}$ 
or $\ket{\psi_i} =  \ket{i}$. 
\item According to the optimal strategy for two-point certification of 
pure quantum states (Corollary
\ref{cor:_pure_state}),  we prepare 
conditional measurements $\Gamma_i$. For a fixed statistical significance 
$\delta$ and a given label
$i$, the optimal 
measurement $\Gamma_i $  is defined as
\begin{enumerate}

\item  if $\delta \ge \frac{1}{2}$, then $\Gamma_i = \mathbb{H} \proj{i^\perp} 
\mathbb{H}$.

\item if  $\delta < \frac{1}{2}$, then  $\Gamma_i = 
\ketbra{\gamma}{\gamma}$ for 
$\ket{\gamma} =
\sqrt{1-\delta} \ket{i} - \sqrt{\delta} \ket{ i^\perp}$.
\end{enumerate}
\item We perform the measurement $ \{ \Gamma_i, \1 - \Gamma_i \}$ on 
$\proj{\psi_i}$.
\item We read the measurement outcome $k \in \{1,2\}$.
\item Finally, basing on value of $k$ we make a decision whether we accept 
(for $ k = 1$) or 
reject (for $k = 2$) the null hypothesis $H_0$. 
\end{enumerate}
\end{example}

\section{Parallel multiple-shot certification}\label{sec:multiple} In this
section we focus on the scenarios in which we have access to $N$ copies of the
certified quantum objects. The copies of a given object can be used in many
configurations. The most general strategy, in the literature referred to as the
adaptive scenario, assumes that we are allowed to perform any processing between
uses of each provided copy. The adaptive scenario can be described with the
formalism of quantum networks \cite{chiribella2009theoretical}. In this paper we
restrict our attention only to the special case of adaptive strategy, when all
copies are used in parallel. This approach, known as the parallel scenario, has a
simplified description based on the tensor product of the copies of a given
quantum object. Henceforth, in our cases, the tensor product of pure states will
be again a pure state, tensor product of unitary channels -- a unitary channel
and tensor product of von Neumann measurements -- a von Neumann measurement.
Therefore, we will be able to apply our results from previous sections.

At this point, a natural question arises: can the parallel scenario be optimal? In
the case of discrimination of unitary channels and von Neumann measurements, the
positive answer was obtained \cite{chiribella2008memory,puchala2018multiple}.
Nevertheless, in some special cases of quantum channels or quantum measurements
by using an adaptive scenario we are able to improve the probability of correct
discrimination of such objects
\cite{harrow2010adaptive,krawiec2020discrimination}. In the case of
certification of quantum objects, the optimality of parallel scenario was showed
for unitary channels \cite{lu2010optimal}. In this section, we prove the
parallel approach for certification of von Neumann measurements is also optimal,
see Theorem~\ref{th:parallel_optimal}.

Let us begin with the certification of pure states. Such certification
can be understood as certifying states $\ket{\psi}^{\otimes N}$ and
$\ket{\varphi}^{\otimes N}$. The following corollary generalizes the results
from Theorem \ref{thm_pure_state}.

\begin{corollary}\label{rem_pure_state}
In the case of certification of pure states $\ket{\psi}^{\otimes N}$ and
$\ket{\varphi}^{\otimes N}$ with statistical significance
$\delta \in [0,1]$, the minimized probability of the type II error yields
\begin{equation}
p_{\text{II}}^{(N)}  = \left\{ \begin{array}{ll} 0
& |\braket{\psi}{\varphi}|^N \leq \sqrt{\delta} \\
\left(|\braket{\psi}{\varphi}|^N \sqrt{1-\delta} -
\sqrt{1-|\braket{\psi}{\varphi}|^{2N} }
\sqrt{\delta}\right)^2
&  |\braket{\psi}{\varphi}|^N > \sqrt{\delta}
\end{array} \right.
\end{equation}
where $N$ is the number of uses of the pure state.
\end{corollary}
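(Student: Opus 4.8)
The plan is to reduce the $N$-copy certification of $\ket{\psi}^{\otimes N}$ and $\ket{\varphi}^{\otimes N}$ to the single-copy case already handled by Theorem~\ref{thm_pure_state}. The crucial observation is that the entire analysis in Theorem~\ref{thm_pure_state} depends on the two hypotheses $\ket{\psi}$ and $\ket{\varphi}$ only through the modulus of their overlap $|\braket{\psi}{\varphi}|$. Indeed, any two pairs of pure states sharing the same overlap modulus are related by a unitary, and both $p_\text{I}$ and $p_\text{II}$ are unitarily invariant; hence the optimized probability of the type II error is a function of $|\braket{\psi}{\varphi}|$ alone.

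\emph{Key steps.} First I would compute the overlap of the $N$-fold tensor states. Since
\begin{equation}
\braket{\psi^{\otimes N}}{\varphi^{\otimes N}} = \braket{\psi}{\varphi}^N,
\end{equation}
the relevant modulus is $\left| \braket{\psi^{\otimes N}}{\varphi^{\otimes N}} \right| = |\braket{\psi}{\varphi}|^N$. Second, I would observe that certifying $\ket{\psi}^{\otimes N}$ against $\ket{\varphi}^{\otimes N}$ is nothing but a single instance of two-point certification of pure states, now living in the Hilbert space $\left(\C^d\right)^{\otimes N}$, with the two states being $\ket{\psi}^{\otimes N}$ and $\ket{\varphi}^{\otimes N}$. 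Theorem~\ref{thm_pure_state} applies verbatim to any pair of pure states in any finite-dimensional space, so it applies here. Third, I would simply substitute $|\braket{\psi}{\varphi}|^N$ in place of $|\braket{\psi}{\varphi}|$ into the formula of Theorem~\ref{thm_pure_state}. This yields the stated dichotomy:
\begin{equation}
p_{\text{II}}^{(N)} = \begin{cases} 0 & |\braket{\psi}{\varphi}|^N \le \sqrt{\delta}, \\ \left( |\braket{\psi}{\varphi}|^N \sqrt{1-\delta} - \sqrt{1 - |\braket{\psi}{\varphi}|^{2N}} \sqrt{\delta} \right)^2 & |\braket{\psi}{\varphi}|^N > \sqrt{\delta}. \end{cases}
\end{equation}

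\emph{On the main obstacle.} There is essentially no hard analytic step here; the corollary is a direct specialization of Theorem~\ref{thm_pure_state}. The only point that deserves a sentence of care is justifying that the optimal measurement is allowed to act jointly on all $N$ copies, i.e.\ that we are genuinely measuring the composite state $\ket{\psi}^{\otimes N}$ rather than measuring each copy separately and post-processing. Because the parallel scenario permits an arbitrary joint POVM $\{\Omega, \1 - \Omega\}$ on $\left(\C^d\right)^{\otimes N}$, the minimization over $\Omega$ in \eqref{eq:pII_states} is exactly the one optimized in Theorem~\ref{thm_pure_state}, and no structural restriction is imposed on $\Omega$. Thus the single-shot optimality transfers immediately, and the substitution $|\braket{\psi}{\varphi}| \mapsto |\braket{\psi}{\varphi}|^N$ completes the argument.
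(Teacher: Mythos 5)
Your proposal is correct and matches the paper's own (implicit) reasoning: the paper presents this corollary as a direct specialization of Theorem~\ref{thm_pure_state} to the pair of pure states $\ket{\psi}^{\otimes N}$, $\ket{\varphi}^{\otimes N}$, whose overlap modulus is $|\braket{\psi}{\varphi}|^N$. Your extra remark that the joint POVM on $\left(\C^d\right)^{\otimes N}$ is unrestricted, so the single-shot theorem applies verbatim, is exactly the point that makes the substitution legitimate.
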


One can note that for a given statistical significance $\delta$, by taking $N
\ge \frac{\log\sqrt{\delta}}{\log |\braket{\psi}{\varphi}|}$ we obtain
$p_\text{II}=0$. This is not in contradiction with the statement that if one
cannot distinguish states perfectly in one step, then they cannot by
distinguished perfectly in any finite number of tries, because the error is
hidden in $p_\text{I}$. This error decays exponentially, and the optimal
exponential error rate, depending on a formulation, can be stated as the Stein
bound, the Chernoff bound, the Hoeffding bound, and the Han-Kobayashi bound, see
\cite{hayashi2009discrimination} and references therein.

Secondly, we focus on the certification of unitary channels. The scenario of
parallel certification can be seen as certifying channels $\Phi_{\1^{\otimes
N}}$ and $\Phi_{U^{\otimes N}}$. Hence, for the parallel certification of such
unitary channels we have the following corollary generalizing the results from
Theorem \ref{thm:unitary_channels}.

\begin{corollary}\label{cor:unitary-paral}
In the case of parallel certification of unitary channels $\Phi_{\1^{\otimes
N}}$ and $\Phi_{U^{\otimes N}}$ with statistical
significance $\delta \in [0,1]$, the minimized probability of the type II error
yields
\begin{equation}
p_{\text{II}}^{(N)}  =
 \nu^2_{\sqrt{1-\delta}} \left(U^{\otimes N}\right)
\end{equation}
 where $N$ is the number of uses of the unitary channel.
\end{corollary}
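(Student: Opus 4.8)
The plan is to reduce the parallel certification of unitary channels to a single-shot certification of one larger unitary channel, exactly as Theorem~\ref{thm:unitary_channels} handles the single-copy case. First I would observe that using $N$ copies of $\Phi_U$ in parallel is precisely the same as using a single copy of the channel $\Phi_{U^{\otimes N}}$, since $\Phi_U \otimes \cdots \otimes \Phi_U = \Phi_{U^{\otimes N}}$ as quantum channels, and similarly $\Phi_\1^{\otimes N} = \Phi_{\1^{\otimes N}}$. The matrix $U^{\otimes N}$ is again unitary, so the whole parallel scenario is literally an instance of the single-shot certification problem treated in Theorem~\ref{thm:unitary_channels}, merely with $U$ replaced by $W \coloneqq U^{\otimes N}$.

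With that identification in hand, I would apply Theorem~\ref{thm:unitary_channels} directly to $W = U^{\otimes N}$. That theorem, together with its reformulation in terms of the $q$-numerical range in Eq.~\eqref{eq:unitary_result_alternative}, yields
\begin{equation}
p_{\text{II}}^{(N)} = \nu^2_{\sqrt{1-\delta}}\left(U^{\otimes N} \otimes \1\right) = \nu^2_{\sqrt{1-\delta}}\left(U^{\otimes N}\right),
\end{equation}
where the second equality is the tensor-product invariance of the $q$-numerical range distance, $\nu_q(X \otimes \1) = \nu_q(X)$, recorded in Eq.~\eqref{properties-q-nr-tensor-product}. This invariance is exactly what also eliminated the auxiliary system in the single-copy proof, so it plays the same role here: any entanglement with an ancilla is unnecessary, and the optimal input state can be taken supported on the two most distant eigenvectors of $U^{\otimes N}$.

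The only genuine content beyond a direct citation is to confirm that replacing $U$ by $U^{\otimes N}$ is legitimate, i.e.\ that the parallel strategy really is the most general one allowed in this scenario and that the minimization over input states in Eq.~\eqref{eq:pII_unitary_channels} is unaffected by passing to the tensor power. Since the statement fixes attention to the parallel scheme (no intermediate processing), the composite object is genuinely the single channel $\Phi_{U^{\otimes N}}$, and the argument of Theorem~\ref{thm:unitary_channels} applies verbatim with $U^{\otimes N}$ in place of $U$. I do not expect any real obstacle: the corollary is essentially a restatement of Theorem~\ref{thm:unitary_channels} for the unitary $U^{\otimes N}$, and the main point to make explicit is simply the functoriality $\Phi_U^{\otimes N} = \Phi_{U^{\otimes N}}$ that makes the reduction valid.
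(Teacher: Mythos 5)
Your proposal is correct and takes essentially the same route as the paper: the paper likewise observes that the tensor product of unitary channels is again a unitary channel, identifies the parallel scenario with single-shot certification of $\Phi_{U^{\otimes N}}$ versus $\Phi_{\1^{\otimes N}}$, and applies Theorem~\ref{thm:unitary_channels} in its $q$-numerical-range form, Eq.~\eqref{eq:unitary_result_alternative}, with $U^{\otimes N}$ in place of $U$. No further argument is needed.
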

From the above it follows that if $0 \in W_{\sqrt{1- \delta}}(U^{\otimes N})$,
then the channels $\Phi_\1^{\otimes N}$ and $\Phi_U^{\otimes N}$  can be
certified with $p_{\text{II}} = 0 $. Let $\Theta$ be the angle between a pair
of two most distant eigenvalues of a unitary matrix $U$.
The perfect certification can be achieved by taking  $ N = \ceil{\frac{2\arccos
\sqrt{\delta}}{\Theta}} $. Observe that in the special case $\delta = 0 $, we
recover the well-known formula $ N = \ceil{\frac{ \pi}{\Theta}} $ being the
number of unitary channels required for perfect discrimination in the scheme of
symmetric distinguishability of unitary channels \cite{duan2007entanglement}.
The dependence between the number $N$ of used unitary channels and the shape of
$W_{\sqrt{1-\delta}}(U^{\otimes N})$ is presented in
Fig.~\ref{fig:q-numerical-range-zero}.

\begin{figure}[h!]
\includegraphics[scale=1]{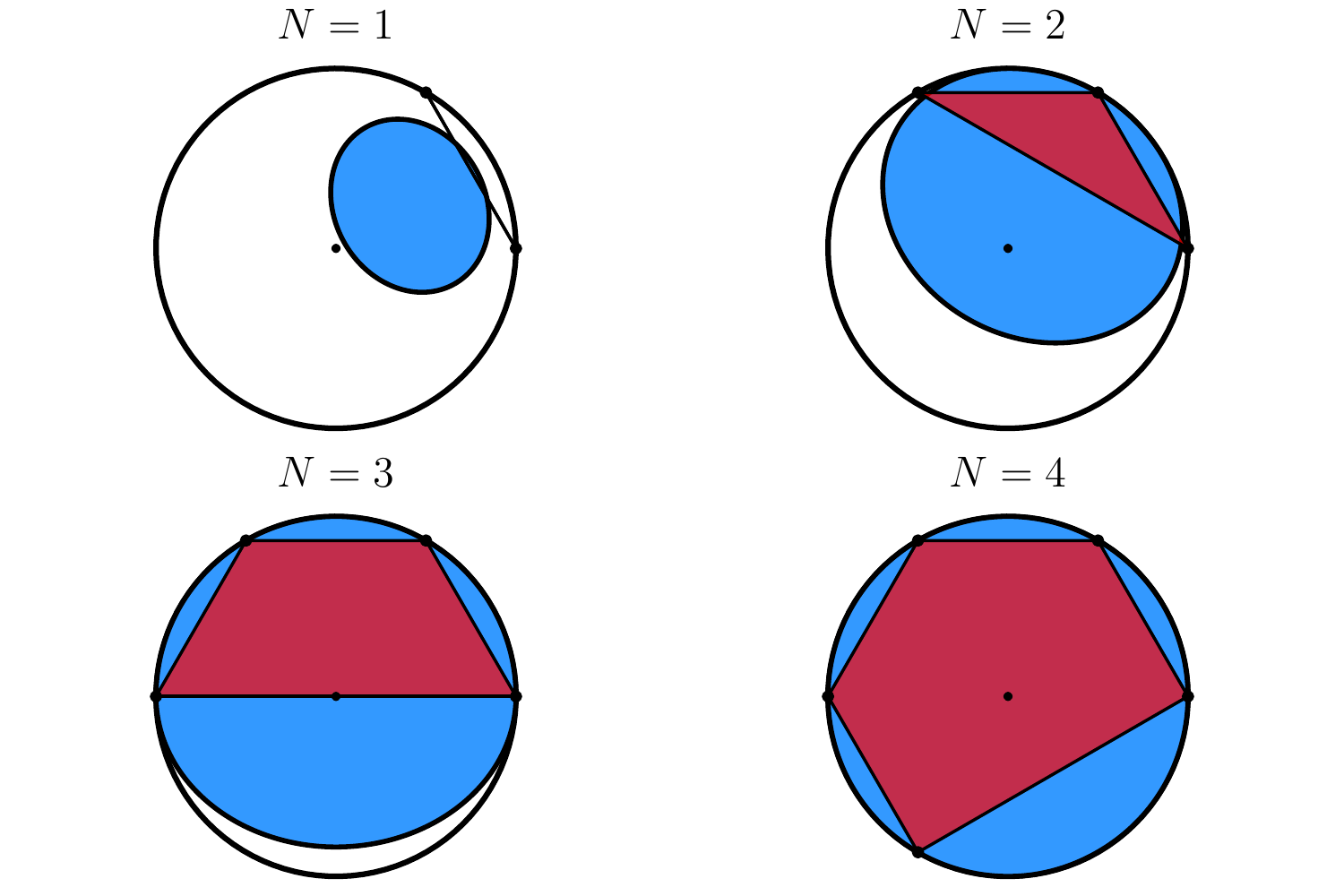}
\caption{Numerical ranges $W(U^{\otimes N})$ (polytops) and
$\sqrt{1-\delta}$-numerical
ranges $W_{\sqrt{1-\delta}}(U^{\otimes N})$ (ovals)  of $U \in
\UU_2 $ with
eigenvalues $1$ and $\ee^{ \frac{\pi \ii}{3}}$, for $N=1,2,3,4$ with
statistical significance $\delta = 0.7$.
\label{fig:q-numerical-range-zero}}
\end{figure}

We have established similar results for the case of certifying $N$ copies of
von Neumann measurements $\PP_\1$
and $\PP_U$. We consider only the parallel scenario
and therefore this can be understood as certifying von Neumann measurements
$\PP_{U^{\otimes N}}$ and $\PP_{\1^{\otimes N}}$. This issue is  studied in
Theorem~\ref{thm:parallelmeasurement}.
Moreover, it will turn out in Theorem~\ref{th:parallel_optimal} that the 
parallel scenario is optimal for the certification of von Neumann measurements.

\begin{theorem}\label{thm:parallelmeasurement}
In the case of certification of von Neumann measurements $\PP_{U^{\otimes N}}$
and $\PP_{\1^{\otimes N}}$ with statistical
significance $\delta \in [0,1]$, the minimized probability of the type II error
yields
\begin{equation}\label{eq-wqq}
p_{\text{II}}^{(N)}  = \max_{E \in \diaguni_{d}}
 \nu^2_{\sqrt{1-\delta}} \left(U^{\otimes N}E^{\otimes N}\right),
\end{equation}
where $N$ is the number of uses of the von Neumann measurements.
\end{theorem}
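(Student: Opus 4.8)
\textbf{Proof proposal for Theorem~\ref{thm:parallelmeasurement}.}

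The plan is to reduce the $N$-copy parallel certification problem to the single-shot case already settled in Theorem~\ref{thm_measurements}, by exploiting the multiplicativity of the parallel scheme. The central observation is that the parallel use of $N$ copies of a von Neumann measurement $\PP_U$ is itself a von Neumann measurement, namely $\PP_{U^{\otimes N}}$. Indeed, the effects $\{\proj{u_{i_1}} \otimes \cdots \otimes \proj{u_{i_N}}\}$ are exactly the rank-one projectors onto the columns of $U^{\otimes N}$, and the label $(i_1,\ldots,i_N)$ is read out jointly. Hence certifying $N$ parallel copies of $\PP_\1$ against $\PP_U$ is \emph{formally identical} to the single-shot certification of the two von Neumann measurements $\PP_{\1^{\otimes N}}$ and $\PP_{U^{\otimes N}}$, to which Theorem~\ref{thm_measurements} applies verbatim.

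Applying Theorem~\ref{thm_measurements} with $U$ replaced by $U^{\otimes N}$ and the ambient dimension $d$ replaced by $d^N$ immediately yields
\begin{equation}
p_{\text{II}}^{(N)} = \max_{F \in \diaguni_{d^N}} \nu^2_{\sqrt{1-\delta}}\left(U^{\otimes N} F\right),
\end{equation}
where the maximization runs over \emph{all} diagonal unitaries $F$ of dimension $d^N$. The work therefore lies in showing that this maximum is attained by a diagonal unitary of the restricted tensor-product form $E^{\otimes N}$ with $E \in \diaguni_d$, which would establish \eqref{eq-wqq}. The inequality $p_{\text{II}}^{(N)} \geq \max_{E \in \diaguni_d} \nu^2_{\sqrt{1-\delta}}(U^{\otimes N}E^{\otimes N})$ is free, since $E^{\otimes N}$ is a particular diagonal unitary in $\diaguni_{d^N}$; the content is the reverse inequality.

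To obtain the reverse inequality I would invoke the diamond-norm characterization \eqref{diamond_norm_minue} together with the multiplicative structure established in~\cite{puchala2018multiple} for the parallel discrimination of von Neumann measurements. Concretely, the key fact to import is that the optimal diagonal correction for $U^{\otimes N}$ may be taken in tensor-product form $E_0^{\otimes N}$, where $E_0 \in \arg\max_{E \in \diaguni_d} \nu(UE)$ is the single-shot optimizer from \eqref{eq:vn-0}; this is precisely the structural result underlying the parallel discrimination analysis, where the most distant pair of eigenvalues of $(UE_0)^{\otimes N}$ is the $N$-fold product of the most distant pair of $UE_0$. Given this, the single-shot optimal strategy from the proof of Theorem~\ref{thm_measurements}, now carried out for $(UE_0)^{\otimes N}$, produces a state $\ket{\psi_0}$ and measurement $\Omega_0$ achieving $\nu^2_{\sqrt{1-\delta}}(U^{\otimes N} E_0^{\otimes N})$, which forces the maximum over $F \in \diaguni_{d^N}$ down to the tensor-product value.

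\textbf{Main obstacle.} The delicate step is justifying that the maximum over the full group $\diaguni_{d^N}$ collapses to the tensor-product subfamily $\{E^{\otimes N}: E \in \diaguni_d\}$. A general $F \in \diaguni_{d^N}$ has far more freedom than any $E^{\otimes N}$, so one must argue that this extra freedom cannot increase $\nu_{\sqrt{1-\delta}}(U^{\otimes N}F)$ beyond what $E_0^{\otimes N}$ already achieves. I expect this to follow from the eigenvalue/most-distant-pair geometry of $q$-numerical range combined with the reduction in~\cite{puchala2018multiple}, but it is the one point where the parallel case genuinely requires more than a mechanical substitution into Theorem~\ref{thm_measurements}, and it is where I would concentrate the argument.
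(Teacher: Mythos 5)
Your proposal takes essentially the same route as the paper: apply Theorem~\ref{thm_measurements} verbatim to $U^{\otimes N}$ in dimension $d^N$ to get $p_{\text{II}}^{(N)} = \max_{F \in \diaguni_{d^N}} \nu^2_{\sqrt{1-\delta}}\left(U^{\otimes N}F\right)$, then collapse the maximization over $\diaguni_{d^N}$ to the tensor-product family $\left\{E^{\otimes N}: E \in \diaguni_d\right\}$. The ``main obstacle'' you flag is exactly the step the paper handles by citing Theorem~1 of \cite{puchala2018multiple}, so your argument is complete once that reference is invoked rather than re-derived.
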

\begin{proof}
The von Neumann measurements $\PP_{U^{\otimes N}}$
and $\PP_{\1^{\otimes N}}$ satisfy assumptions of
Theorem~\ref{thm_measurements},  therefore we have
\begin{equation}
p_{\text{II}}^{(N)}  = \max_{E \in \diaguni_{d^N}}
 \nu^2_{\sqrt{1-\delta}} \left(U^{\otimes N}E\right).
\end{equation}
Whereas, the equality 
\begin{equation}
\max_{E \in \diaguni_{d^N}} \nu^2_{\sqrt{1-\delta}} \left(U^{\otimes N}E\right)
= \max_{E \in \diaguni_{d}}
 \nu^2_{\sqrt{1-\delta}} \left(U^{\otimes N}E^{\otimes N}\right)
\end{equation}
follows from \cite[Theorem 1]{puchala2018multiple}.
\end{proof}

Finally, we state the theorem providing the optimality of parallel scenario for
the certification of von Neumann measurements.

\begin{theorem}\label{th:parallel_optimal} The parallel scenario  for
certification of von Neumann measurements is optimal. More formally, for any
adaptive certification scenario, the probability of the type II error cannot be
smaller than in the parallel scenario.
\end{theorem}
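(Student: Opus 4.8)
The plan is to prove the single nontrivial inequality, namely that every adaptive strategy has type~II error at least $p_{\text{II}}^{(N)} = \max_{E \in \diaguni_d} \nu^2_{\sqrt{1-\delta}}(U^{\otimes N}E^{\otimes N})$, the value attained by the parallel scheme in Theorem~\ref{thm:parallelmeasurement}; the opposite inequality is automatic, since the parallel scheme is itself a special adaptive one. The starting point is the structural identity already exploited in the proof of Theorem~\ref{thm_measurements}: for every $E \in \diaguni_d$ one has $\PP_\1 = \Delta \circ \Phi_\1$ and $\PP_U = \Delta \circ \Phi_{(UE)^\dagger}$, where $\Delta$ is the completely dephasing channel. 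The essential feature is that one and the same post-processing $\Delta$ turns the trivial channel into $\PP_\1$ and the unitary channel $\Phi_{(UE)^\dagger}$ into $\PP_U$, with the diagonal unitary $E$ rendered invisible by the dephasing.

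Next I would describe an arbitrary adaptive strategy as a quantum network (comb) with $N$ slots hosting the unknown measurement, followed by a binary measurement accepting or rejecting $H_0$. Fixing $E$, I substitute $\PP_\bullet = \Delta \circ \Phi_\bullet$ in every slot and absorb each dephasing channel into the processing box immediately following that slot; concretely, the new box is the old one composed with $\Delta \otimes \mathrm{id}$ on the relevant registers, which is again a legitimate channel. The resulting network has slots that accept unitary channels, and by construction, when fed $\Phi_\1^{\otimes N}$ (respectively $\Phi_{(UE)^\dagger}^{\otimes N}$) it outputs exactly the same state as the original network fed with $\PP_\1$ (respectively $\PP_U$). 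It is therefore an adaptive strategy for certifying $\Phi_\1$ against $\Phi_{(UE)^\dagger}$ whose type~I and type~II errors coincide with those of the original strategy; in particular the type~I error remains at most $\delta$.

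Then I would invoke the optimality of the parallel scheme for certification of unitary channels~\cite{lu2010optimal}: the type~II error of any adaptive strategy for $\Phi_\1$ against $\Phi_{(UE)^\dagger}$ is bounded below by the parallel value, which by Corollary~\ref{cor:unitary-paral} equals $\nu^2_{\sqrt{1-\delta}}\left(((UE)^\dagger)^{\otimes N}\right) = \nu^2_{\sqrt{1-\delta}}(U^{\otimes N}E^{\otimes N})$, using $\nu_q(X^\dagger)=\nu_q(X)$ for real $q$ and $(UE)^{\otimes N}=U^{\otimes N}E^{\otimes N}$. Since the constructed unitary strategy shares its type~II error with the original measurement strategy, the latter is at least $\nu^2_{\sqrt{1-\delta}}(U^{\otimes N}E^{\otimes N})$. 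As $E \in \diaguni_d$ was arbitrary while the left-hand side does not depend on $E$, maximizing over $E$ yields $p_{\text{II}}^{\text{adaptive}} \ge \max_{E \in \diaguni_d}\nu^2_{\sqrt{1-\delta}}(U^{\otimes N}E^{\otimes N}) = p_{\text{II}}^{(N)}$, as required.

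The step I expect to require the most care is the network manipulation: one must set up the comb formalism precisely enough to justify commuting each $\Delta$ into the adjacent processing operation without altering any output, and to verify that the object obtained is admissible as an adaptive strategy for unitary channels, so that the cited optimality result genuinely applies to it. The remaining ingredients — the identities $\PP_\bullet=\Delta\circ\Phi_\bullet$ and the elementary properties of $\nu_q$ and of tensor powers — are routine given the earlier sections.
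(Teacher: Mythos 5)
Your proof is correct and follows the paper's route in all essentials: the same decomposition $\PP_\1=\Delta\circ\Phi_\1$, $\PP_U=\Delta\circ\Phi_{(UE)^\dagger}$, the same reduction of an adaptive measurement strategy to an adaptive unitary-channel strategy, the same appeal to the optimality of the parallel scheme for unitary channels \cite{lu2010optimal} evaluated via Corollary~\ref{cor:unitary-paral}, and the same final comparison with Theorem~\ref{thm:parallelmeasurement} (the paper fixes the maximizing $E_0\in\diaguni_d$ at the outset where you keep $E$ arbitrary and maximize at the end; this is immaterial). The one point where you genuinely diverge is the treatment of the dephasing channels. The paper leaves the processing boxes untouched, commutes all $N$ copies of $\Delta$ to the very end of the network --- a step that is valid because its intermediate processings are isometries controlled, in the computational basis, on the already-measured registers --- and then invokes the data processing inequality (Lemma~\ref{data-process-in}) to conclude that certifying the dephased outputs is at least as hard as certifying the outputs of the underlying unitary-channel network. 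You instead absorb each $\Delta$ into the channel (or, for the last slot, the final effect, using self-adjointness of $\Delta$) immediately following it, so the constructed unitary-channel network reproduces the original output states exactly and both error probabilities are preserved with equality. Your variant is slightly more economical: it requires neither the controlled-isometry form of the comb to justify any commutation nor Lemma~\ref{data-process-in}, whereas the paper's version keeps the original comb intact at the price of an extra one-sided inequality --- harmless here, since only a lower bound on the adaptive error is needed.
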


\begin{proof}
Let us denote by $\Xi(\cdot)$ an adaptive scenario, whose input are $N$ copies
of a given quantum operation and outputs a quantum state. Let $\tilde
p_{\text{II}}^{(N)}$ be the minimized probability of the type II error for
states $\Xi\left(\PP_\1^{(N)}\right), \Xi\left(\PP_U^{(N)}\right)$. Define $d_1
\le d_2 \le \ldots \le d_N$ to be a non-decreasing sequence of natural numbers
and assume that $d_N = d_N' d_N''$ for $d_N', d_N'' \in \N$. The numbers
$d_1,\ldots,d_N$ will denote size of auxiliary systems occurring in a
construction of $\Xi\left(\PP_\1^{(N)}\right), \Xi\left(\PP_U^{(N)}\right)$. We
assume that the last auxiliary system having dimension $d_N$ is a tensor product
of two subsystems: one with dimension $d_N'$ and second with dimension $d_N''$,
which will be traced out. The general adaptive scenario for certification of $N$
copies of von Neumann measurements $\PP_U$, $\PP_\Id$ can be represented as
\cite{puchala2018multiple}:
\begin{equation}\label{eq:scenario}
\begin{split}
\Xi\left(\PP_U^{(N)}\right) = &(\1_{d^{N-1}} \otimes \PP_U \otimes \1_{d_N'} 
\otimes \tr_{d_N''}) \circ \Xi_{N-1} 
\circ \ldots
\circ(\1_d \otimes \PP_U \otimes \1_{d^{N-2}d_2}) \circ \\
&\Xi_1 \circ (\PP_U 
\otimes 
\1_{d^{N-1}d_1})\left(\proj{\psi_0}\right), \\
\Xi\left(\PP_\Id^{(N)}\right) =  &(\1_{d^{N-1}} \otimes \PP_\1 \otimes 
\1_{d_N'} 
\otimes \tr_{d_N''}) \circ \Xi_{N-1} 
\circ \ldots
\circ(\1_d \otimes \PP_\1 \otimes \1_{d^{N-2}d_2}) \circ \\
&\Xi_1 \circ (\PP_\1 
\otimes 
\1_{d^{N-1}d_1})\left(\proj{\psi_0}\right).
\end{split}
\end{equation}
The channels $\Xi_i$ are given by $\Xi_i(X) = W_i X W_i^\dagger$, 
such that
\begin{equation}
W_i = \sum_{k_1,k_2, \ldots, k_i} \proj{k_1, k_2,\ldots, k_i} \otimes V_{k_1, 
k_2, \ldots ,k_i},
\end{equation} 
where $V_{k_1, k_2, \ldots ,k_i} \in M_{d^{N-i} d_i, d^{N-i} 
d_{i+1}}$ are isometry matrices for each $i \in \{1,\ldots,N-1\}$.
The above scenario is presented in Fig.~\ref{fig:adaptive}.

\begin{figure}[h!]
\includegraphics{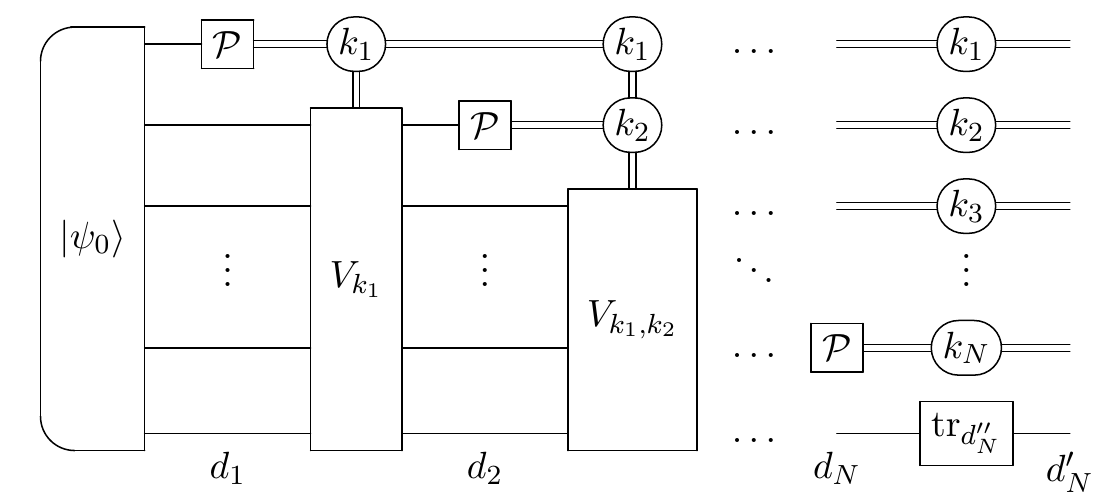}
\caption{Application of an adaptive scenario $\Xi\left(\cdot\right)$ on $N$
copies of a von Neumann measurement $\PP$ where $\PP \in \{\PP_\1, \;\PP_U\}$.
\label{fig:adaptive}}
\end{figure}

Let us decompose a von Neumann measurement $\PP_{U}$ as a composition of a
unitary channel $\Phi_{(UE_0)^\dagger} $ and the completely dephasing channel
$\Delta$, that is $\PP_U = \Delta \circ \Phi_{(UE_0)^\dagger}$, where  $E_0 $ is
a matrix maximizing Eq.~\eqref{eq-wqq}. Therefore, one can rewrite the adaptive
scenarios $\Xi\left(\PP_{U}^{(N)}\right)$ and $\Xi\left(\PP_\1^{(N)}\right)$ as
(see also Fig.~\ref{fig:adaptive2})
\begin{equation}
\begin{split}
\Xi\left(\PP_{U}^{(N)}\right) &= \left(\Delta^{\otimes N} \otimes \1_{d_N'} 
\otimes \tr_{d_N''}\right)
\circ
\Xi
\left(\Phi_{(UE_0)^\dagger}^{(N)}\right),\\
\Xi\left(\PP_{\1}^{(N)}\right) &= \left(\Delta^{\otimes N} \otimes \1_{d_N'} 
\otimes \tr_{d_N''}\right)
\circ \Xi
\left(\Phi_{\1}^{(N)}\right),\label{eq:adaptive2}
\end{split}
\end{equation} 
where 
\begin{equation}
\begin{split}
\Xi
\left(\Phi_{(UE_0)^\dagger}^{(N)}\right) = &(\1_{d^{N-1}} \otimes 
\Phi_{(UE_0)^\dagger} \otimes 
\1_{d_N}) \circ \Xi_{N-1} 
\circ \ldots
\circ(\1_d \otimes \Phi_{(UE_0)^\dagger} \otimes \1_{d^{N-2}d_2}) \circ \\
&\Xi_1 \circ (\Phi_{(UE_0)^\dagger} 
\otimes 
\1_{d^{N-1}d_1})\left(\proj{\psi_0}\right), \\
\Xi
\left(\Phi_{\1}^{(N)}\right) =  &(\1_{d^{N-1}} \otimes 
\Phi_{\1} \otimes 
\1_{d_N}) \circ \Xi_{N-1} 
\circ \ldots
\circ(\1_d \otimes \Phi_{\1} \otimes \1_{d^{N-2}d_2}) \circ \\
&\Xi_1 \circ (\Phi_{\1} 
\otimes 
\1_{d^{N-1}d_1})\left(\proj{\psi_0}\right).
\end{split}
\end{equation}

\begin{figure}[h!]
\includegraphics{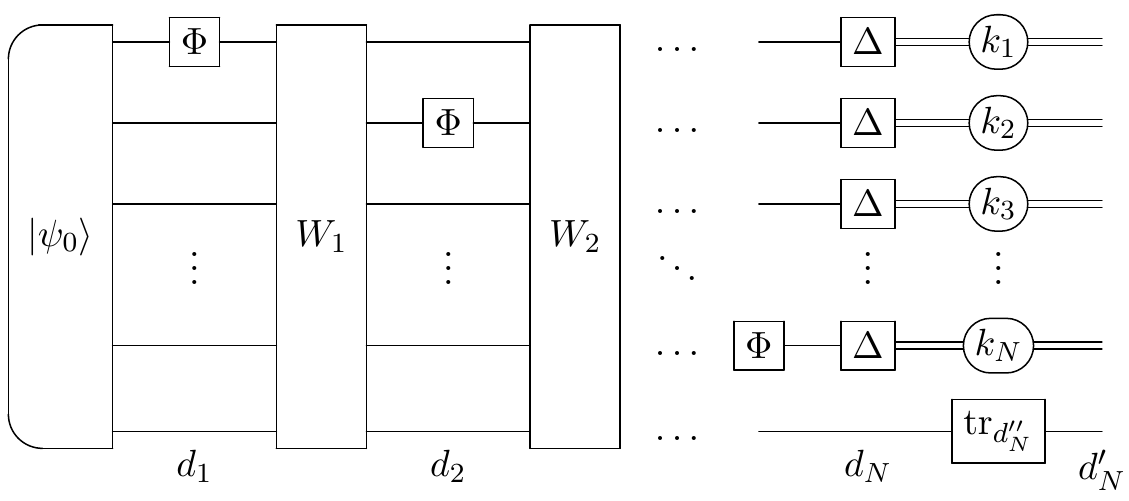}
\caption{An equivalent description of Eq.~\eqref{eq:scenario}, formalized as
Eq.~\eqref{eq:adaptive2}. The scenario $\Xi(\cdot)$ is applied to $N$ copies of
unitary channel $\Phi$, where $\Phi \in \{\Phi_\1, \Phi_{(UE_0)^\dagger}\}$. As
a last step we perform a completely dephasing channel $\Delta$.
\label{fig:adaptive2}}
\end{figure}

Let us observe that the scenarios $\Xi
\left(\Phi_{(UE_0)^\dagger}^{(N)}\right), \Xi
\left(\Phi_{\1}^{(N)}\right)$ describe certification of
$N$ copies of unitary channels $\Phi_{(UE_0)^\dagger}$ and  $\Phi_{\1}$. 
Using the data processing inequality in Lemma~\ref{data-process-in}
in Appendix~\ref{app:th}, the probability of the type II error for
certification between  $\Xi\left(\PP_{U}^{(N)}\right)$ and
$\Xi\left(\PP_{\1}^{(N)}\right)$ is no smaller that for  $\Xi
\left(\Phi_{(UE_0)^\dagger}^{(N)}\right)$ and $\Xi 
\left(\Phi_{\1}^{(N)}\right)$.

Following \cite{lu2010optimal}, the minimized probability of the type II error
for $N$-copies of unitary channels $\Phi_{(UE_0)^\dagger}$ and $\Phi_{\1}$ is 
achieved in
the parallel scenario, that is whenever $\Xi 
\left(\Phi_{(UE_0)^\dagger}^{(N)}\right)=
\Phi_{(UE_0)^\dagger}^{\otimes N} (\proj{\psi_0})$ and $\Xi
\left(\Phi_{\1}^{(N)}\right)=\Phi_\1^{\otimes N} (\proj{\psi_0})$, where 
$\ket{\psi_0}$ is an optimal state. For this 
case,
from Corollary~\ref{cor:unitary-paral} the probability of the type II error for
certification  of $N$ copies of unitary channels $\Phi_{(UE_0)^\dagger}$ and 
$\Phi_{\1}$ is
equal $\nu^2_{\sqrt{1-\delta}} \left(U^{\otimes N}E_0^{\otimes N}\right)$.
Hence, we obtain
\begin{equation}
\tilde p_{\text{II}}^{(N)}  \geq \nu^2_{\sqrt{1-\delta}} \left(U^{\otimes
N}E_0^{\otimes N}\right).
\end{equation}
From Theorem~\ref{thm:parallelmeasurement}, we have
\begin{equation}
p_{\text{II}}^{(N)}  = \nu^2_{\sqrt{1-\delta}} \left(U^{\otimes N}E_0^{\otimes
N}\right),
\end{equation}
where $p_{\text{II}}^{(N)}$ is the minimized probability of type II error for
the parallel certification of von Neumann measurements $\PP_\1$ and $\PP_U$.
Then, we have
\begin{equation}
\tilde p_{\text{II}}^{(N)} \geq p_{\text{II}}^{(N)}
\end{equation}
which finishes the proof.
\end{proof}

\section{Conclusions}\label{sec:conclusions} In this work we studied the
two-point certification of quantum states, unitary channels and von Neumann
measurements. The problem of certification of quantum objects is inextricably
related with quantum hypothesis testing. We were interested in minimizing the
probability of type II error given the upper bound on the probability of type I
error.

Although the problems of certification of quantum states and unitary channels  
are well-studied, we pointed out the connection of
certification of unitary channels with the notion of $q$-numerical range.
Afterwards, we extended this approach to the certification of
von Neumann measurements and found a formula for minimized probability of the
type II error and the optimal certification strategy. It turned out that this
formula can be also connected with the notion of $q$-numerical range.
Remarkably, it appeared
that in the case of certification of von Neumann measurements the use of
entangled input state can significantly improve the certification.

Finally, we focused on the certification of the von Neumann measurements in the
parallel scenario. More precisely, we generalized the above results for the
situation when the von Neumann measurements can be used $N$ times in parallel.
We showed that optimal certification of von Neumann measurements can be
performed without any processing additional processing, \ie\ in the parallel way.

\section*{Acknowledgments}

This work was supported by the Foundation for Polish Science (FNP) under grant
number POIR.04.04.00-00-17C1/18-00.

We would like to thank Bart{\l}omiej Gardas for fruitful discussions.

\bibliographystyle{ieeetr}
\bibliography{certification}

\appendix

\section{Certification of states}\label{app:states}
In this appendix we present the proof of Theorem~\ref{thm_pure_state}.

\begin{proof}[Proof of Theorem 1] Without loss of generality 
we can assume Eq.
$\ket{\varphi} = \alpha\ket{\psi}+\beta\ket{\psi^\perp}$, for some $\alpha,\beta
\geq0$ satisfying $\alpha^2+\beta^2=1$. For any  effect 
$\widetilde{\Omega}$ satisfying 
$\bra{\psi} \widetilde{\Omega} \ket{\psi}\ge 1 - \delta$, the
effect $\Omega$ defined as $\Omega=\Pi \widetilde{\Omega} \Pi$, where
$\Pi=\proj{\psi}+\proj{\psi^\perp}$, also satisfies the condition $\bra{\psi}
\Omega \ket{\psi}\ge 1 - \delta$ and simultaneously returns the same value of
probability of type II error. 
Hence, we can assume that rank-$2$ operator $\Omega$ satisfies 
$\Omega=\Pi\Omega\Pi$.
From the above, let $\Omega = a\Pi + b	\proj{\omega} $, where 
$\ket{\omega} = c\ket{\psi} -d\ket{\psi^\perp}$, $c \ge 0$, $d \in \C$, 
such that $c^2+ |d|^2=1$ and $a,b \in
[0,1] $, such that $a+b \le 1$. By the assumption on the value $p_{\text{I}}$,
we have 
\begin{equation}1- p_{\text{I}}(\Omega) = \bra{\psi} \Omega \ket{\psi}=
a+bc^2 \ge 1-\delta. 
\end{equation} 
Let us calculate the probability $p_{\text{II}}$:
\begin{equation}
p_{\text{II}} = \min_{\Omega: p_{\text{I}}(\Omega) \le \delta} \bra{\varphi} 
\Omega 
\ket{\varphi} = 
\min_{a,b,c,d \in \mathcal{A}}
\left( \alpha^2(a+bc^2) + \beta^2(a + 
b|d|^2) - 2 \alpha \beta bc \Re(d) \right)
\end{equation}
where $\mathcal{A} := \{ a,b,c,d:  a+b \le 1,\  a+bc^2 \ge 1-\delta, \ c^2+
|d|^2=1, \ a,b, c \in [0,1],\  d \in \C\}$.
Note that the above formula is minimized when $d\in \mathbb{R}$ is nonnegative.
Hence
\begin{equation}
\bra{\varphi} \Omega \ket{\varphi} = a+b \left( \alpha c - 
\beta d \right)^2.
\end{equation}
Thus, our task reduces to minimizing the formula 
\begin{equation}
p_\text{II} = 
\min_{a,b,c \in \mathcal{B}}
a+b \left( \alpha c - 
\beta \sqrt{1-c^2} \right)^2
\end{equation}
where 
$\mathcal{B}:= \{  a,b, c \in [0,1], \  a+b \le 1, \  a+bc^2 \ge 1-\delta \}$.
We consider two cases.
\begin{enumerate}
\item  
If $\alpha \leq \sqrt{\delta}$, then we take $a=0,b=1,c=\beta, d =
\sqrt{1-\beta^2}$. In this case $a, b, c \in \mathcal{B}$
and we obtain $p_{\text{II}}=0$. The optimal strategy is represented
by effect $ \Omega_0 = \proj{\omega}$, where $\ket{\omega} = \beta\ket{\psi} -
\alpha\ket{\psi^\perp}$.
\item 		
Let $\alpha > \sqrt{\delta}$ and take $a=0,b=1,c=\sqrt{1-\delta}, d = 
\sqrt{\delta}$. Again $a, b, c \in \mathcal{B}$ and
$p_{\text{II}}=\left(\alpha \sqrt{1-\delta} - \beta \sqrt{\delta}\right)^2$. The
optimal strategy is represented by effect $ \Omega_0 = \proj{\omega}$ where
$\ket{\omega} = \sqrt{1-\delta}\ket{\psi} - \sqrt{\delta}\ket{\psi^\perp}$. 
The optimality of this value can be checked by using standard constrained 
optimization techniques. 
\end{enumerate}
\end{proof}

\section{$q$-numerical range and certification of unitary channels}\label{app:q-numerical-range}

\subsection{$q$-numerical range in the problem of two-point certification of unitary channels}
In this appendix we will present an alternative derivation the result for the 
probability of the 
type II error in the certification of unitary channels given in Eq.~\ref{error-for-unitary-channel}.
 
We would like to bound the probability of the type I error by $\delta$, that is 
$p_{\text{I}}^{\ket{\psi}}(\Omega) = \tr ((\1 - \Omega)\proj{\psi}) \leq 
\delta$. Let us consider $\Omega = \ketbra{\omega}{\omega}$.
Hence, we have 
\begin{equation}
\tr \left(\Omega \proj{\psi} \right) = |\braket{\omega}{\psi} |^2 \geq 1 - 
\delta.
\end{equation}
The probability of the type II error takes the form
\begin{equation}
\begin{split}
p_{\text{II}} 
&= \min_{\ket{\psi}} \min_{\Omega: p_{\text{I}}^{\ket{\psi}}(\Omega) \leq 
\delta}
\tr \left(\Omega (U \otimes \1) \proj{\psi} (U^\dagger \otimes \1)\right) \\
& = \min_{\ket{\psi}} 
\min_{\ket{\omega}: p_{\text{I}}^{\ket{\psi}}(\proj{\omega}) \leq \delta}
\bra{\psi} (U^\dagger \otimes \1) \proj{\omega} (U \otimes \1) \ket{\psi} \\
&= \min_{\ket{\psi}} 
\min_{\ket{\omega}: p_{\text{I}}^{\ket{\psi}}(\proj{\omega}) \leq \delta}
|\bra{\psi} (U \otimes \1) \ket{\omega} |^2.
\end{split}
\end{equation}
Let us recall that the $q$-numerical range is defined as
\begin{equation}
W_q (A)= \{\bra{\xi_0} A \ket{\xi_1}: \braket{\xi_0}{\xi_1} = q \}
\end{equation}
and we use the notation
\begin{equation}\label{eq:dist_q_nr_to_zero_def}
\nu_{q}(X)= \min\{ |x|: x \in W_q(X) \}.
\end{equation}
Now from the definition of the $q$-numerical range for $q = \sqrt{1-\delta}$ 
and its properties~\cite{duan2009perfect}
\begin{equation}\label{properties-q-nr-inclusions}
W_{q'} \subseteq \frac{q'}{q} W_{q} \quad \text{for} \quad q \leq q', \quad 
q,q' \in \R 
\end{equation} and 
\begin{equation}\label{properties-q-nr-tensor-product}
W_q (X \otimes \1) = W_q(X), \quad q\in \R
\end{equation}
it easy to see that \begin{equation}
\nu_q (X \otimes \1) = \nu_q(X), \quad q\in \R,
\end{equation}
which will imply that
\begin{equation}
p_{\text{II}}  = 
\nu^2_{\sqrt{1-\delta}}\left( U \otimes \1 \right)
=\nu^2_{\sqrt{1-\delta}}\left( U  \right).
\end{equation}

Therefore, we conclude that the use of entanglement for the case of
certification of unitary channels does not improve the certification.

\subsection{Distance of $q$-numerical range to zero}
In this subsection we will focus on calculating the distance from the 
$q$-numerical range the to the origin of the coordinate system.
Let us begin with the two-dimensional case when the unitary matrix $U$ has two 
eigenvalues $\lambda_1$ and $\lambda_2$. Without loss of generality we can 
assume $\lambda_1 = 1$.	
From \cite{li1998some} we know that the $q$-numerical range is an 
elliptical disc with eccentricity equal to $q$ and foci $q \lambda_1$ and 
$q \lambda_2$, see Fig \ref{fig:ellipse}. Let $c$ denote the distance from 
the center of the ellipse to 
the focus and $a$ be the distance from the center of the ellipse to its vertex.
Using this notation the eccentricity yields $q = c/a$.	
Let $b$ denote the distance from the center of the ellipse to its co-vertex, 
which it the point which saturates the minimum.	

\begin{figure}[h!]
\includegraphics[scale=0.7]{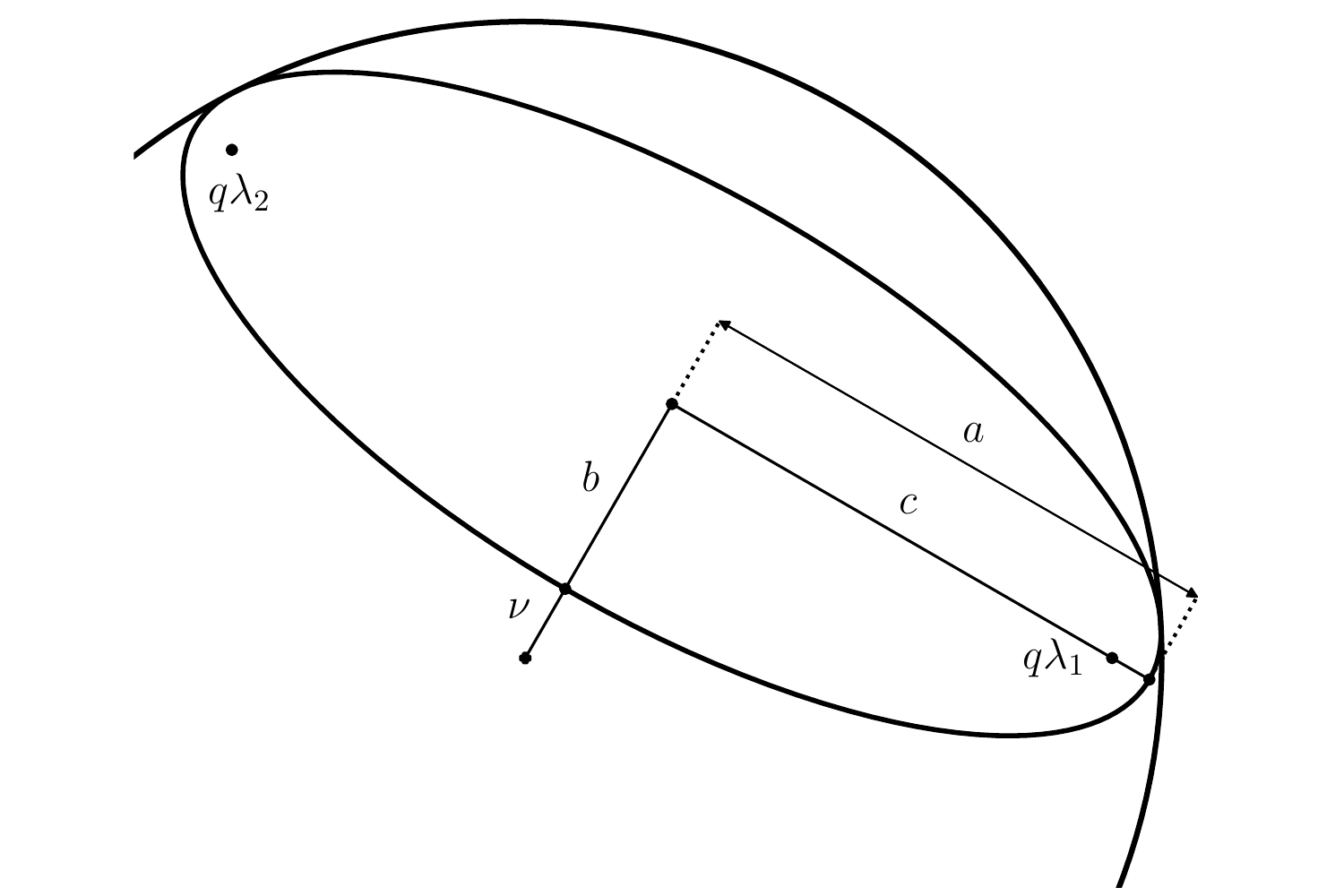}
\caption{Schematic illustration of an ellipse and notation used in Appendix, 
where we use shortcut notation $\nu \coloneqq \nu_{q}(U)$.}\label{fig:ellipse}
\end{figure}

First, we will calculate $b$. We note that
\begin{equation}
c = \frac{1}{2} \left\Vert q \lambda_1 - q \lambda_2 \right\Vert 
= \frac{q}{2} \left\Vert \lambda_1 - \lambda_2 \right\Vert 
= \frac{\sqrt{1-\delta}}{2} \left\Vert \lambda_1 - \lambda_2 \right\Vert.
\end{equation}
From the properties of the ellipse and the form of the eccentricity  $q$ we have
\begin{equation}
b = \sqrt{a^2 - c^2} = \sqrt{\frac{c^2}{q^2}-c^2} = c \sqrt{\frac{1}{q^2} - 1}
= c \sqrt{\frac{1}{1-\delta} - 1}= c \sqrt{\frac{\delta}{1-\delta}}.
\end{equation}
Hence
\begin{equation}
b = \frac{\sqrt{1-\delta}}{2} \left\Vert \lambda_1 - \lambda_2 \right\Vert
 \sqrt{\frac{\delta}{1-\delta}}
 = \frac{\sqrt{\delta}}{2} \left\Vert \lambda_1 - \lambda_2 \right\Vert.
\end{equation}
On the other hand we have
\begin{equation}
\nu_q(U) +b = \left\Vert \frac{q\lambda_1 + q\lambda_2}{2} \right\Vert
= \frac{q}{2} \left\Vert \lambda_1 + \lambda_2 \right\Vert
= \frac{\sqrt{1-\delta}}{2} \left\Vert \lambda_1 + \lambda_2 \right\Vert
\end{equation}
and therefore
\begin{equation}
\begin{split}
\nu_q(U) 
&= \frac{\sqrt{1-\delta}}{2} \left\Vert \lambda_1 + \lambda_2 
\right\Vert
- \frac{\sqrt{\delta}}{2} \left\Vert \lambda_1 - \lambda_2 \right\Vert \\
&= \frac{1}{2} \left( \sqrt{1-\delta}\left\Vert \lambda_1 + \lambda_2 
\right\Vert
- \sqrt{\delta} \left\Vert \lambda_1 - \lambda_2 \right\Vert \right).
\end{split}
\end{equation}	

Now we need to show that the above expression for the distance $\nu_q \left( U
\right)$ is valid also for higher dimensions. The boundary of $q$-numerical
ranges for larger matrices is described in \cite{li1998some}. It consists of
parts of a few ellipses obtained is an analogous way. Let $\lambda_1$ and
$\lambda_d$ be the pair of the most distant eigenvalues of $U$. Let $\lambda_i$
and $\lambda_j$ bo some pair of eigenvalues such that $i,j \not = 1,d$. Let
$\widetilde{\nu}_q \left( U \right)$ be the distance from zero the ellipse built
on $\lambda_i$ and $\lambda_j$ in the same way as above. Our goal is to prove
that $\widetilde{\nu}_q \left( U \right) > \nu_q \left( U \right)$.

We note that $\left\Vert \lambda_1 - \lambda_2 \right\Vert > \left\Vert
\lambda_i - \lambda_j \right\Vert$. Hence to prove that $\widetilde{\nu}_q
\left( U \right) > \nu_q \left( U \right)$ it suffices to show  that $\left\Vert
\lambda_1 + \lambda_2 \right\Vert < \left\Vert \lambda_i + \lambda_j
\right\Vert$. As all the eigenvalues lie on the unit circle, the from the
parallelogram law we have $\left\Vert \lambda_1 + \lambda_2 \right\Vert^2  
= 4- \left\Vert \lambda_1 - \lambda_2 \right\Vert^2$.
Therefore
\begin{equation}
\begin{split}
\left\Vert \lambda_1 + \lambda_2 \right\Vert
&= \sqrt{4- \left\Vert \lambda_1 - \lambda_2 \right\Vert^2}
<  \sqrt{4- \left\Vert \lambda_i - \lambda_j \right\Vert^2} \\
&= \sqrt{4- \left( 4- \left\Vert \lambda_i + \lambda_j \right\Vert^2 \right)}
= \left\Vert \lambda_i + \lambda_j \right\Vert.
\end{split}
\end{equation}
and thus $\widetilde{\nu}_q \left( U \right) > \nu_q \left( U \right)$, from 
which it 
follows that 
\begin{equation}
\nu_{\sqrt{1-\delta}} \left( U \right)
= \frac{1}{2} \left( \sqrt{1-\delta}\left\Vert \lambda_1 + \lambda_d \right\Vert
- \sqrt{\delta} \left\Vert \lambda_1 - \lambda_d \right\Vert \right)
\end{equation}	
holds for any dimension $d$. 
The above formula can be easily translated into trigonometric functions where 
$\Theta$ is the angle between $\lambda_1$ and $\lambda_d$. Hence, we have
\begin{equation}
\nu_{\sqrt{1-\delta}} \left( U \right)
= \sqrt{1-\delta}\cos \left(\frac{\Theta}{2}\right) - \sqrt{\delta}  \sin 
\left(\frac{\Theta}{2}\right).
\end{equation}	

Therefore,
\begin{equation}
p_\text{II} 
= \nu^2_{\sqrt{1-\delta}} \left( U \otimes \1 \right)
= \nu^2_{\sqrt{1-\delta}} \left( U \right)
= \left( \sqrt{1-\delta}\cos \left(\frac{\Theta}{2}\right) -
\sqrt{\delta}  \sin \left(\frac{\Theta}{2}\right)  \right)^2.
\end{equation}

\section{Certification of von Neumann measurements}\label{app:th}
%

In this appendix we recall a few technical lemmas  necessary to prove the main 
theorem in the paper.
The first lemma is the data processing inequality. This inequality, along with 
its proof, can be found eg. in \cite{wang2012one}. However, to keep this 
work self-consistent we present our modified version of them.
\begin{lemma}(Data processing inequality)\label{data-process-in}
Let $\delta > 0 $ and $\Omega$ be a positive semidefinite operator such that 
$\Omega \le \1$. For any quantum channel $\Phi$ and quantum states $\rho, 
\sigma$ the following 
holds
\begin{equation}
\min_{\Omega: \tr (\Omega \rho) \ge 1-\delta}  \tr ( \Omega \sigma) \le 
\min_{\Omega: \tr (\Omega \Phi(\rho)) \ge 1-\delta}  \tr ( \Omega 
\Phi(\sigma)).
\end{equation}
\end{lemma}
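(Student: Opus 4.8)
The plan is to transport the optimal effect for the right-hand side back through the channel using its adjoint (Heisenberg-picture) map. Recall that for a quantum channel $\Phi$ there is a unique adjoint superoperator $\Phi^*$ determined by $\tr(\Phi^*(A) B) = \tr(A \Phi(B))$ for all operators $A,B$. Since $\Phi$ is completely positive, so is $\Phi^*$; in particular $\Phi^*$ maps positive semidefinite operators to positive semidefinite operators. Moreover, because $\Phi$ is trace-preserving, $\Phi^*$ is unital: for every $B$ we have $\tr(\Phi^*(\1) B) = \tr(\1 \,\Phi(B)) = \tr(\Phi(B)) = \tr(B)$, whence $\Phi^*(\1) = \1$. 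Consequently $\Phi^*$ sends effects to effects: if $0 \le \Omega' \le \1$, then $0 \le \Phi^*(\Omega') \le \Phi^*(\1) = \1$.

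First I would fix an effect $\Omega'$ attaining the minimum on the right-hand side, so that $\tr(\Omega' \Phi(\rho)) \ge 1-\delta$ and $\tr(\Omega' \Phi(\sigma))$ equals the right-hand side of the claimed inequality. Then I would define $\Omega \coloneqq \Phi^*(\Omega')$. By the discussion above, $\Omega$ is a legitimate effect, and the defining property of the adjoint gives $\tr(\Omega \rho) = \tr(\Omega' \Phi(\rho)) \ge 1-\delta$, so $\Omega$ is feasible for the minimization on the left-hand side.

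Finally, the same adjoint relation yields $\tr(\Omega \sigma) = \tr(\Omega' \Phi(\sigma))$. Since $\Omega$ is feasible on the left, the left-hand minimum is no larger than $\tr(\Omega \sigma)$, and therefore
\begin{equation}
\min_{\Omega: \tr (\Omega \rho) \ge 1-\delta} \tr ( \Omega \sigma)
\le \tr(\Phi^*(\Omega') \sigma)
= \tr(\Omega' \Phi(\sigma))
= \min_{\Omega: \tr (\Omega \Phi(\rho)) \ge 1-\delta} \tr ( \Omega \Phi(\sigma)),
\end{equation}
which is exactly the assertion. The only genuinely delicate point is verifying that the pulled-back operator $\Phi^*(\Omega')$ is again an effect, \ie that it is bounded above by $\1$; this is where unitality of $\Phi^*$ (equivalently, trace preservation of $\Phi$) is essential, since complete positivity of $\Phi^*$ alone would only secure positivity of $\Phi^*(\Omega')$, not the upper bound $\Phi^*(\Omega') \le \1$.
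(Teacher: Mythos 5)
Your proof is correct and follows essentially the same route as the paper's: both pull the right-hand side's effect back through the adjoint channel $\Phi^\dagger$, use the duality relation $\tr(\Phi^\dagger(\Omega)\rho) = \tr(\Omega\,\Phi(\rho))$ to see that the pulled-back operator remains feasible, and conclude by comparing the two minimizations. Your write-up is in fact slightly more careful than the paper's, since you explicitly verify via unitality of $\Phi^\dagger$ that the pulled-back operator is bounded above by $\1$, a point the paper dismisses as easy to see.
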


\begin{proof}
Let us consider two-point certification of two quantum states $\rho$ and 
$\sigma$
with statistical significance $\delta$. To calculate the probability of the 
type II error, $p_{\text{II}}$, we formulate the problem as 
\begin{equation}
\min_{\Omega:\\ \tr (\Omega \rho) \ge 1-\delta}  \tr ( \Omega \sigma).
\end{equation}

Now, consider the scenario in which we use as processing the quantum channel
$\Phi$ on states $\rho$ and $\sigma$. We want to calculate
\begin{equation}
\min_{\Omega: \tr (\Omega \Phi(\rho)) \ge 1-\delta}  \tr ( \Omega 
\Phi(\sigma))
\end{equation}
which is equivalent to
\begin{equation}
\min_{\Omega: \tr (\Phi^\dagger(\Omega)\rho) \ge 1-\delta}  \tr ( 
\Phi^\dagger(\Omega) 
\sigma).
\end{equation}
It easy to see that $\Phi^\dagger(\Omega)$ is also a measurement and 
\begin{equation}
\{ \Phi^\dagger(\Omega): \tr (\Phi^\dagger(\Omega) \rho) \ge 1-\delta \} 
\subseteq \{ 
\Omega: \tr 
(\Omega \rho) \ge 1-\delta \}.
\end{equation} 
Eventually, we obtain the data processing inequality given by 
\begin{equation}
\min_{\Omega: \tr (\Omega \rho) \ge 1-\delta}  \tr ( \Omega \sigma) \le 
\min_{\Omega: \tr (\Omega \Phi(\rho)) \ge 1-\delta}  \tr ( \Omega 
\Phi(\sigma)).
\end{equation}
\end{proof}
The following lemma is proved in the work~\cite{puchala2018strategies}.
\begin{lemma}(Lemma 5 from~\cite{puchala2018strategies}, {\it direct 
	implication})\label{existence-discriminator}
Assume that $E_0 \in \DU_d$ satisfies the condition
\begin{equation}
||\Phi_{UE_0} - 
\Phi_\1||_\diamond=||\PP_{U} - \PP_\1||_\diamond < 2. 
\end{equation}
Let $\lambda_1, \lambda_d$ be a pair of the most distant eigenvalues of 
$UE_0$
and $\Pi_1,\Pi_d$ be the projectors   onto  the   subspaces spanned  
by  the
eigenvectors corresponding  to $\lambda_1$ and $\lambda_d$, respectively. 
Then, there exist states $\rho_1, \rho_d $, satisfying the 
following conditions  
\begin{equation}\label{exist-discriminator}
\begin{split}
\rho_1 &= \Pi_1 \rho_1 \Pi_1 \\
\rho_d &= \Pi_d \rho_d \Pi_d \\
\diag(\rho_1) &= \diag(\rho_d).
\end{split}
\end{equation}
\end{lemma}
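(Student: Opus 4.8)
The plan is to translate the hypothesis on $E_0$ into an extremal property of the ordinary numerical range and then read off the two states from the optimizers of an associated minimax problem. Since $\|\Phi_{UE}-\Phi_\1\|_\diamond = 2\sqrt{1-\nu^2(UE)}$ is strictly decreasing in $\nu(UE)$, the assumption $\|\Phi_{UE_0}-\Phi_\1\|_\diamond=\|\PP_U-\PP_\1\|_\diamond=\min_{E\in\DU_d}\|\Phi_{UE}-\Phi_\1\|_\diamond$ says exactly that $E_0$ maximizes $\nu(UE)$ over $E\in\DU_d$, and the bound $<2$ gives $\nu(UE_0)>0$, equivalently $0\notin W(UE_0)$ and $\Theta<\pi$. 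First I would use the supporting-line description $\nu(UE)=\max_\theta\min_{\rho\in\DD_d}\Re\,\ee^{-\ii\theta}\tr(UE\rho)$; because $\ee^{-\ii\theta}E$ already ranges over all of $\DU_d$, the global phase can be absorbed, so that
\[
\max_{E\in\DU_d}\nu(UE)=\max_{E\in\DU_d}\min_{\rho\in\DD_d}\Re\tr(UE\rho).
\]
Rotating $E_0$ by a global phase (which changes neither the spectral projectors $\Pi_1,\Pi_d$ nor the diamond norm) I may assume the point of $W(UE_0)$ closest to the origin lies on the positive real axis, so $\lambda_1+\lambda_d>0$ and $\min_\rho\Re\tr(UE_0\rho)=\nu(UE_0)$.

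Next I would set up the minimax. Enlarging $\DU_d$ to its convex hull $\mathcal{K}=\{\diag(z_1,\dots,z_d):|z_k|\le 1\}$ leaves the maximum unchanged, since $A\mapsto\Re\tr(UA\rho)$ is linear and attains its maximum at an extreme point; both $\mathcal{K}$ and $\DD_d$ are convex and compact and $(A,\rho)\mapsto\Re\tr(UA\rho)$ is bilinear, so Sion's minimax theorem yields a saddle point. Taking $E_0$ as an optimal strategy of the maximizer and choosing $\rho^\star$ as an optimal strategy of the minimizer, interchangeability of optimal strategies makes $(E_0,\rho^\star)$ itself a saddle point. Minimizer optimality forces $\rho^\star$ to minimize $\Re\tr(UE_0\cdot)$, hence to be supported on the minimal eigenspace of $\Re(UE_0)$; as all eigenvalues lie on the unit circle and $\lambda_1,\lambda_d$ (the most distant pair) have the smallest real part $\cos(\Theta/2)$, this eigenspace is exactly $\mathrm{range}\,\Pi_1\oplus\mathrm{range}\,\Pi_d$. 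Maximizer optimality forces $E_0=\diag(\ee^{\ii\phi_k})$ to align phases, so that $(\rho^\star UE_0)_{kk}=|(\rho^\star U)_{kk}|\ge 0$ is real for every $k$.

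I then set $\rho_1:=\Pi_1\rho^\star\Pi_1/\tr(\Pi_1\rho^\star)$ and $\rho_d:=\Pi_d\rho^\star\Pi_d/\tr(\Pi_d\rho^\star)$, which satisfy $\rho_1=\Pi_1\rho_1\Pi_1$ and $\rho_d=\Pi_d\rho_d\Pi_d$ by construction, and extract the diagonal condition from the phase-alignment. Writing $\lambda_1=\ee^{\ii\Theta/2}$, $\lambda_d=\ee^{-\ii\Theta/2}$ and using $UE_0\Pi_1=\lambda_1\Pi_1$, $UE_0\Pi_d=\lambda_d\Pi_d$ on the support of $\rho^\star$, a short computation gives
\[
(\rho^\star UE_0)_{kk}=\lambda_1\bra{k}\rho^\star\Pi_1\ket{k}+\lambda_d\bra{k}\rho^\star\Pi_d\ket{k},
\]
and its imaginary part simplifies to $\sin(\Theta/2)\big(\bra{k}\Pi_1\rho^\star\Pi_1\ket{k}-\bra{k}\Pi_d\rho^\star\Pi_d\ket{k}\big)$, the cross-block contributions cancelling identically. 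Reality of $(\rho^\star UE_0)_{kk}$ together with $\sin(\Theta/2)>0$ (from $\Theta<\pi$) forces $\bra{k}\Pi_1\rho^\star\Pi_1\ket{k}=\bra{k}\Pi_d\rho^\star\Pi_d\ket{k}$ for all $k$. Summing over $k$ gives $\tr(\Pi_1\rho^\star)=\tr(\Pi_d\rho^\star)$, so dividing the equal unnormalized diagonals by this common trace yields $\diag(\rho_1)=\diag(\rho_d)$, as required.

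The main obstacle I anticipate is the minimax/duality step: one must justify the passage to the convex hull $\mathcal{K}$, invoke Sion's theorem to obtain a genuine saddle point rather than merely a primal optimum, and argue that the lemma's fixed $E_0$ pairs with some minimizer $\rho^\star$ into a saddle. The global-phase normalization making $\lambda_1+\lambda_d$ real is a convenient but essential bookkeeping device. Finally, one should record the harmless degeneracy: if $\lambda_1$ or $\lambda_d$ is a multiple eigenvalue, then $\Pi_1,\Pi_d$ are the corresponding higher-rank spectral projectors and every step above is unchanged.
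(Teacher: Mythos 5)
First, a point of reference: the paper does not prove this lemma at all---it imports it verbatim as Lemma~5 of \cite{puchala2018strategies}---so your argument has to stand entirely on its own. It almost does, but there is one genuine gap, and it sits exactly at the crux. You replace $\DU_d$ by its convex hull $\mathcal{K}=\{\diag(z_1,\dots,z_d):|z_k|\le 1\}$ and assert that this ``leaves the maximum unchanged, since $A\mapsto\Re\tr(UA\rho)$ is linear and attains its maximum at an extreme point.'' That reasoning is valid for the \emph{inner} maximization in the order $\min_\rho\max_A$, but what you need is invariance of the \emph{outer} maximization in the order $\max_A\min_\rho$: there the objective $A\mapsto\min_\rho\Re\tr(UA\rho)=\lambda_{\min}\bigl(\tfrac12(UA+A^\dagger U^\dagger)\bigr)$ is a pointwise minimum of linear functions, hence \emph{concave}, and a concave function on a compact convex set may attain its maximum in the interior rather than at an extreme point (compare $-x^2$ on $[-1,1]$). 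A priori you therefore only know $\max_{E\in\DU_d}\min_\rho\Re\tr(UE\rho)\le\max_{A\in\mathcal{K}}\min_\rho\Re\tr(UA\rho)$. If that inequality were strict, your (rotated) $E_0$ would not be an optimal strategy of the maximizer in the convexified game, interchangeability of optimal strategies would not apply, and no saddle point of the form $(E_0,\rho^\star)$ would be guaranteed---everything downstream loses its footing. Proving this equality of the nonconvex and convexified games is essentially the hard content of the cited result; it cannot be dismissed as bookkeeping.

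The good news is that everything after the saddle point is correct and clean: minimizer optimality does confine $\rho^\star$ to $\mathrm{range}\,\Pi_1\oplus\mathrm{range}\,\Pi_d$; maximizer optimality does give $(\rho^\star UE_0)_{kk}=|(\rho^\star U)_{kk}|\ge 0$; the cross-block terms $\lambda_1\bra{k}\Pi_d\rho^\star\Pi_1\ket{k}+\lambda_d\bra{k}\Pi_1\rho^\star\Pi_d\ket{k}=2\Re\bigl(\lambda_1\bra{k}\Pi_d\rho^\star\Pi_1\ket{k}\bigr)$ are indeed real (using $\lambda_d=\overline{\lambda_1}$ after your rotation); and reality of the diagonal then yields $\diag(\Pi_1\rho^\star\Pi_1)=\diag(\Pi_d\rho^\star\Pi_d)$, hence the claim after normalizing by the equal (and nonzero) traces. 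To close the gap, use the hypothesis itself, which your argument never invokes beyond ``$E_0$ maximizes $\nu(UE)$'': setting $v_{\mathcal{K}}=\max_{A\in\mathcal{K}}\min_\rho\Re\tr(UA\rho)=\min_\sigma\sum_i|\bra{i}\sigma U\ket{i}|$ (Sion), one can compute for an input purifying a reduced state $\sigma$ that $\|((\PP_U-\PP_\1)\otimes\1)(\proj{\psi})\|_1=\sum_i\sqrt{a_i^2-b_i^2}$ with $a_i=\bra{u_i}\sigma\ket{u_i}+\bra{i}\sigma\ket{i}$ and $b_i=2|\bra{i}\sigma U\ket{i}|$, and superadditivity of the concave $1$-homogeneous function $(a,b)\mapsto\sqrt{a^2-b^2}$ gives $\|\PP_U-\PP_\1\|_\diamond\le 2\sqrt{1-v_{\mathcal{K}}^2}$. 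Combined with the hypothesis $2\sqrt{1-\nu^2(UE_0)}=\|\Phi_{UE_0}-\Phi_\1\|_\diamond=\|\PP_U-\PP_\1\|_\diamond$, this forces $\nu(UE_0)\ge v_{\mathcal{K}}$, which is exactly the missing equality. That supplement is real work---comparable to what \cite{puchala2018strategies} does via first-order optimality arguments---but with it your minimax route would become a complete, self-contained, and genuinely different proof.
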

The next corollary follows directly from above lemma.
\begin{corollary}\label{lem:properties-of-discriminator}
Let $\rho_0=\frac12 \rho_1 + \frac12 \rho_d$ be the state satisfying 
conditions 
given by Eq.~\eqref{exist-discriminator}.	Then, for each $i \in 
\{1,\ldots,d\}$ 
we have
\begin{equation}\label{eq:vn-4}
\tr\left(\sqrt{\rho_0}\proj{i} \sqrt{\rho_0}\right)=\tr\left(\sqrt{\rho_0} 
U 
\proj{i} 
U^\dagger \sqrt{\rho_0}\right).
\end{equation}
Moreover, for each $i\in \{1,\ldots,d\}$ such that 
$\bra{i}\rho_0\ket{i}\neq0$ we get
\begin{equation}\label{eq:vn-5}
\left| \frac{\bra{i} \rho_0 U \ket{i} }{\bra{i}\rho_0\ket{i}} 
\right|
=\left| \frac{\lambda_1 + \lambda_d }{2} \right|.
\end{equation}
\end{corollary}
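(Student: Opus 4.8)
The plan is to reduce both identities to elementary computations of diagonal matrix entries, exploiting the spectral structure of $W \coloneqq UE_0$ supplied by Lemma~\ref{existence-discriminator}. First I would record the basic facts about $W$. Since $W$ is unitary and hence normal, its eigenprojectors commute with $W$, so $\Pi_1 W = W\Pi_1 = \lambda_1 \Pi_1$ and $\Pi_d W = W\Pi_d = \lambda_d \Pi_d$. Combined with $\rho_1 = \Pi_1\rho_1\Pi_1$ and $\rho_d = \Pi_d\rho_d\Pi_d$ from \eqref{exist-discriminator}, this yields the key one-sided relations $\rho_1 W = W\rho_1 = \lambda_1\rho_1$ and $\rho_d W = W\rho_d = \lambda_d\rho_d$, and in particular $W^\dagger \rho_0 W = \rho_0$. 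I would also note that, because $E_0 \in \diaguni_d$ is diagonal with $E_0\ket{i} = \eta_i\ket{i}$ and $|\eta_i| = 1$, we have $E_0\proj{i}E_0^\dagger = \proj{i}$, so that $U\proj{i}U^\dagger = W\proj{i}W^\dagger$, while conjugation by $E_0$ leaves every diagonal entry $\bra{i}\cdot\ket{i}$ unchanged since $|\eta_i|^2 = 1$.

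For the first identity \eqref{eq:vn-4} I would rewrite both sides by cycling the trace. The left-hand side equals $\tr(\rho_0\proj{i}) = \bra{i}\rho_0\ket{i}$, while using $U\proj{i}U^\dagger = W\proj{i}W^\dagger$ the right-hand side equals $\tr(\rho_0 W\proj{i}W^\dagger) = \bra{i}W^\dagger\rho_0 W\ket{i}$. The relation $W^\dagger\rho_0 W = \rho_0$ collapses this to $\bra{i}\rho_0\ket{i}$, settling \eqref{eq:vn-4}. It is worth emphasizing that this first identity uses only that $\rho_1,\rho_d$ are supported on eigenspaces of $W$, and not yet the diagonal condition; its role is to guarantee that $\sqrt{\rho_0}\,U\ket{i}$ and $\sqrt{\rho_0}\ket{i}$ have equal norm, so that the normalized vectors appearing in the proof of Theorem~\ref{thm_measurements} are genuinely unit vectors.

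For the second identity \eqref{eq:vn-5} I would compute $\bra{i}\rho_0 U\ket{i}$ directly. Writing $U = W E_0^\dagger$ and using $E_0^\dagger\ket{i} = \bar\eta_i\ket{i}$ reduces it, up to the unit phase $\bar\eta_i$, to $\bra{i}\rho_0 W\ket{i}$. Expanding $\rho_0 = \tfrac12(\rho_1 + \rho_d)$ and applying $\rho_1 W = \lambda_1\rho_1$, $\rho_d W = \lambda_d\rho_d$ gives $\bra{i}\rho_0 W\ket{i} = \tfrac12\lambda_1\bra{i}\rho_1\ket{i} + \tfrac12\lambda_d\bra{i}\rho_d\ket{i}$. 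This is exactly where the diagonal condition $\diag(\rho_1) = \diag(\rho_d)$ enters: setting $p_i \coloneqq \bra{i}\rho_1\ket{i} = \bra{i}\rho_d\ket{i}$ collapses the expression to $\tfrac12(\lambda_1+\lambda_d)p_i$, while simultaneously $\bra{i}\rho_0\ket{i} = p_i$. Dividing, which is legitimate precisely because we assume $\bra{i}\rho_0\ket{i} = p_i \neq 0$, and taking absolute values, both the phase $\bar\eta_i$ and the factor $p_i$ drop out, leaving $\left|\frac{\lambda_1+\lambda_d}{2}\right|$ independently of $i$.

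The computations themselves are routine once these spectral relations are in place; the main conceptual point is recognizing that the diagonal-equality hypothesis $\diag(\rho_1)=\diag(\rho_d)$ is exactly what forces the $i$-dependence in the numerator and denominator to cancel, so that the inner product equals the same constant $\left|\frac{\lambda_1+\lambda_d}{2}\right|$ for every admissible label $i$. This uniformity across labels is what subsequently allows the per-label certification subproblems in Theorem~\ref{thm_measurements} to be handled simultaneously.
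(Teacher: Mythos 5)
Your proof is correct and follows essentially the same route as the paper's: both use the support conditions $\rho_1 = \Pi_1\rho_1\Pi_1$, $\rho_d = \Pi_d\rho_d\Pi_d$ to get the eigenvalue relations $\rho_1 (UE_0) = \lambda_1\rho_1$, $\rho_d (UE_0) = \lambda_d\rho_d$ (hence invariance of $\rho_0$ under conjugation for \eqref{eq:vn-4}), and the condition $\diag(\rho_1)=\diag(\rho_d)$ to cancel the $i$-dependence in \eqref{eq:vn-5}. The one difference is in your favor: you track the diagonal unitary $E_0$ explicitly via $U = (UE_0)E_0^\dagger$ and check that its phases $\eta_i$ drop out of both identities, whereas the paper's proof writes $U = \sum_i \lambda_i \Pi_i$, silently identifying $U$ with $UE_0$ even though $\Pi_1,\Pi_d,\lambda_1,\lambda_d$ belong to $UE_0$ --- an abuse of notation that is harmless precisely because of the phase-invariance you verify.
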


\begin{proof}
Let $U= \sum_{i=1}^d \lambda_i \Pi_i$, where $\{\Pi_i\}_{i=1}^d$ is a set 
of orthogonal projectors. Then
\begin{equation}
\begin{split}
&\tr \left( \sqrt{\rho_0} U \proj{i} U^\dagger\sqrt{\rho_0}\right)
= \bra{i} U^\dagger \rho U \ket{i}
= \bra{i} U^\dagger \left( \frac{1}{2}\rho_1 + \frac{1}{2}\rho_d\right) U 
\ket{i} \\
&=\bra{i} U^\dagger \left( \frac{1}{2} \Pi_1\rho_1 \Pi_1 + 
\frac{1}{2} \Pi_d \rho_d \Pi_d \right) U \ket{i} \\
&=\bra{i} \left(\sum_{i=1}^d \overline{\lambda_i} \Pi_i^\dagger\right) 
\left( 
\frac{1}{2} \Pi_1\rho_1 \Pi_1 + 
\frac{1}{2} \Pi_d \rho_d \Pi_d \right) \left(\sum_{i=1}^d \lambda_i 
\Pi_i\right) \ket{i} \\
&=  \bra{i} \left( \frac{1}{2}\rho_1 + \frac{1}{2}\rho_d\right)  \ket{i}
= \tr \left( \sqrt{\rho_0}  \proj{i} \sqrt{\rho_0}\right).
\end{split}
\end{equation}
where the third equality follows from Lemma \ref{existence-discriminator}.

To prove the second part of the proposition we calculate
\begin{equation}
\begin{split}
\left\vert  \frac{\bra{i} \rho_0 U \ket{i}}{\bra{i}\rho_0\ket{i}} 
\right\vert 
&= \left\vert  \frac{\bra{i} \left( \frac{1}{2}\rho_1 + 
	\frac{1}{2}\rho_d\right) \left(\sum_{i=1}^d \lambda_i \Pi_i \right)
	\ket{i}}{\bra{i}\rho_0\ket{i}} \right\vert \\
&= \left\vert  \frac{\bra{i}  \sum_{i=1}^d \lambda_i \left( \frac{1}{2} 
	\Pi_1\rho_1\Pi_1 + \frac{1}{2}   \Pi_d \rho_d \Pi_d\right) \Pi_i 
	\ket{i}}{\bra{i}\rho_0\ket{i}} \right\vert \\
&= \left\vert  \frac{\bra{i}   \left( \frac{1}{2} \lambda_1 
	\Pi_1\rho_1\Pi_1 + \frac{1}{2} \lambda_d  \Pi_d \rho_d \Pi_d\right) 
	\ket{i}}{\bra{i}\rho_0\ket{i}} \right\vert \\
&= \left\vert  \frac{\bra{i}   \left( \frac{1}{2} \lambda_1 
	\rho_1 + \frac{1}{2} \lambda_d  \rho_d \right) 
	\ket{i}}{\bra{i}\rho_0\ket{i}} \right\vert
=  \left\vert  \frac{\lambda_1 + \lambda_d}{2} \right\vert. 
\end{split}
\end{equation}
\end{proof}

\section{Animation of $q$-numerical range}\label{app:animation}
For an animation of the behavior of the $q$-numerical range of a unitary matrix $U \in \UU_3$ see the attached \verb|gif| file.

\begin{figure}[!htp]
	\caption{An animation of $q$-numerical range of unitary matrix $U \in \UU_3$
	with eigenvalues $1, \ee^{ \frac{\pi \ii}{3}}$ and $\ee^{ \frac{2\pi
	\ii}{3}}$ for all parameters $q \in [0,1]$.}
\end{figure}

\end{document}